\documentclass[journal]{IEEEtran}

\usepackage{balance}
\usepackage{cite}

\usepackage{subcaption}

%
\ifCLASSINFOpdf
   \usepackage[pdftex]{graphicx}
   \DeclareGraphicsExtensions{.pdf,.jpeg,.png}
\else
 
\fi

\usepackage{mathrsfs,amsmath}
\usepackage{amssymb}

\usepackage{amsmath}

\usepackage{amsthm,hyperref,enumerate}
\newtheorem{theorem}{Theorem}

\theoremstyle{definition}
\newtheorem{defn}{Definition}

\theoremstyle{remark}
\newtheorem{remark}{Remark}
\newtheorem{lemma}{Lemma}

\theoremstyle{plain}

\usepackage{amsmath}    
  {
      \theoremstyle{plain}
      \newtheorem{assumption}{Assumption}
  }

\theoremstyle{definition}

\usepackage{ragged2e}

\usepackage[amssymb]{SIunits}
\hypersetup{
	colorlinks=true,
	linkcolor=blue,
	filecolor=blue,      
	urlcolor=blue,
}

\hyphenation{op-tical net-works semi-conduc-tor}

\begin{document}

\title{Cooperative Pursuit with Multi-Pursuer \\  and One Faster Free-moving Evader}

\author{Xu Fang,
        Chen Wang, 
        Lihua Xie,
        Jie Chen,
\thanks{Xu Fang and Lihua Xie are with the School of Electrical and Electronic Engineering, Nanyang Technological University, 639798, Singapore. (e-mail: fa0001xu@e.ntu.edu.sg; elhxie@ntu.edu.sg).} 
\thanks{Chen Wang is with the Robotics Institute, Carnegie Mellon University, Pittsburgh, PA 15213, USA. (e-mail: chenwang@dr.com).} 
\thanks{Jie Chen is with the School of Electronics and Information Engineering, Tongji University,  Shanghai, 200092, China, and is also with the
Key Laboratory of Intelligent Control and Decision of Complex System,
Beijing Institute of Technology, Beijing, 100081, China. (e-mail: chenjie@bit.edu.cn).
}
}


\maketitle
\begin{abstract}
This paper addresses a multi-pursuer single-evader pursuit-evasion game where the free-moving evader moves faster than the pursuers. Most of the existing works impose constraints on the faster evader such as limited moving area and moving direction. When the faster evader is allowed to move freely without any constraint, the main issues are how to form an encirclement to trap the evader into the capture domain, how to balance between forming an encirclement and approaching the faster evader, and what conditions make the capture possible. In this paper, a
distributed pursuit algorithm is proposed to enable pursuers to form an encirclement and approach the faster evader. An algorithm that balances between forming an encirclement and approaching the faster evader is proposed. Moreover, sufficient capture conditions are derived based on the initial spatial distribution and the speed ratios of the pursuers and the evader.
Simulation and experimental results on ground robots validate the effectiveness and practicability of the proposed method. 
\end{abstract}


\begin{IEEEkeywords}
Pursuit-evasion game, Faster evader, Free-moving, Cooperative pursuit.
\end{IEEEkeywords}

\IEEEpeerreviewmaketitle

\section{Introduction}

\IEEEPARstart{R}{esearchers} across diverse disciplines have been studying collective behaviors such as formation, swarming and schooling in animals. In engineering, scientists want to realize such behaviors in multi-robot applications in terrestrial, space, or oceanic exploration. Technological advances in miniaturizing of sensing, computing and localization devices \cite{zhu2017survey, wang2017ultra, guo2016ultra, wang2018kernel} enable multi-robot of higher complexity to perform cooperative tasks such as surveillance, reconnaissance and rescue search. In these applications, one of the 
most interesting problems is to design a cooperative control scheme for a group of agents (pursuers) to pursuit a target (evader). This kind of problem is called pursuit-evasion game. Both the pursuers and evader try to apply certain strategies, based on the available position and velocity information, to maximize their chance of success in this game \cite{li2016escape, zhang2013near,zhong2018model}.
The classical pursuit-evasion game is a differential game \cite{sac:1965p, Petrosyan:1993p}, which makes use of the Hamilton-Jacobi-Isaacs equation to design pursuit and escape strategies. But it
is not suitable for multi-player games because it will encounter tremendous difficulties in the determination of terminal manifold. For the multi-player game, most of the existing works  \cite{makkapati2018optimal, Zhou:2016gc,yan2018reach} assume that the pursuers have maximum speed no less than that of the evader.  
When the evader has faster speed, it is proved that the faster evader can avoid point capture from any number of pursuers with a lower speed \cite{chernous1976problem}. However, the pursuers can use some tools such as manipulator and net to  achieve capture if the distance between the evader and any pursuer is less than a threshold \cite{Zhou:2016gc,jin2010pursuit,Exarchos:2014cl}, which is called non-zero capture radius.

The existing works related to a faster evader usually impose additional constraints on the faster evader. $(a)$ The faster evader is required to pass between two specified pursuers. A closed-form solution is obtained in terms of the elliptic functions \cite{breakwell1975pursuit, hagedorn1976differential}. The optimal strategy for each pursuer is proposed \cite{Zha:2016hu}, which analyzes
the possibilities of capture and escape;  $(b)$ The faster evader is required to be located at the center of an encirclement formed by the pursuers.  The parallel guidance law is used \cite{Xu:2007wu} to guarantee that the faster evader is always trapped in the capture domain,  which requires that the pursuers be angle-evenly distributed around the evader. The problem becomes challenging when the faster evader is allowed to move freely without any constraint. The winning set and pursuit strategy for the one-pursuer and one-evader game are presented in \cite{bakolas2017finite}.
The capture region is bounded for the one-pursuer and one-evader game \cite{Exarchos:2014cl}.
Therefore, one pursuer can not capture the faster free-moving evader and more pursuers are needed to make capture possible. However, how to design a distributed pursuit algorithm for multi-pursuer?

The pursuit algorithms in \cite{jin2010pursuit,wei2007decentralized} can trap the evader into the capture domain when the pursuers have the same speed as that of the evader. For the case when the evader is of a higher speed than pursuers,
Ramana \cite{ramana2016pursuit} divided the multi-pursuer one-evader game into several independent two-pursuer one-evader subgames, and derived a solution based on a two-pursuer one-evader game with restrictive assumptions on the evader and pursuers.  They try to increase each pursuer' individual success rate of capturing the evader without considering the group success rate of capturing the evader. The problem is that increasing each pursuer' individual success rate of capturing the evader may decrease the group success rate of capturing the evader, and then creates an escapable gap that the evader can escape successfully.  Similar works can be found in \cite{yan2018multiagent,Vieira:2009iu,Chen:2016bc,awheda2016decentralized,cai2009novel}. The pursuit strategies in \cite{ramana2016pursuit,yan2018multiagent,Chen:2016bc,awheda2016decentralized,cai2009novel} are also designed to increase the individual 
success rate of capturing the evader
instead of increasing the group success rate.

The pursuit-evasion game is a team game.
We believe that increasing the group success rate of capturing the evader instead of increasing individual success rate is more reasonable. There is need of an encirclement algorithm that cooperatively increases the group success rate of capturing the evader by trying to trap the faster evader within the capture domain. A Q-learning algorithm is proposed \cite{wang2015research} to form a formation by increasing the group success rate of capturing the faster evader, but it requires discrete state and action spaces. For an encirclement algorithm that cooperatively increases the group success rate of capturing the faster evader, two issues need to be addressed: $(a)$ If there are escapable areas, the pursuers should decrease the escapable areas and try to trap the faster evader into the capture domain. $(b)$ If there is no escapable area, the pursuers should keep the encirclement to avoid creating an escapable area.
Moreover, the pursuers should not only try to form an encirclement, but also try to approach the faster evader. Therefore, there is a potential problem of how to balance between forming an encirclement and approaching the faster evader. 

In addition, in a pursuit-evasion game, it is of interest to know conditions under which the evader can be captured. The capture status and transition condition are given in \cite{Chen:2016bc}. Ramana \cite{ramana2017pursuit} proved that it is impossible for the pursuers to capture the evader in a special 'perfectly encircled formation'. Excluding this special case, the existing capture conditions for capturing a faster evader require that the pursuers be angle-evenly distributed around the evader \cite{Xu:2007wu}.

Therefore, it is obvious when the faster evader is allowed to move freely without any constraint, three problems need to be solved: $(a)$ How to design an encirclement algorithm in continuous space to increase the group success rate of capturing the faster evader? $(b)$ How to balance between forming an encirclement and approaching the evader? $(c)$ Under what conditions, successful capturing of the evader can be guaranteed for random initial distribution of the pursuers.

{This work is based on our previous work \cite{Chen:2013tu}, which presented the notion of overlapping angle and methods for the pursuit-evasion game with a faster evader.} In this paper, {we aim to solve the above three problems in the pursuit-evasion game with a faster free-moving evader.}
A distributed pursuit algorithm is designed for the pursuers with different speeds. The main contributions of this paper are as followings:
\begin{enumerate}
    \item An encirclement algorithm is proposed to trap the faster evader into the capture domain.
    \item An algorithm is proposed to balance between forming an encirclement of the evader and approaching it. The algorithm assigns surrounding task and hunting task to each pursuer.
    \item Sufficient capture conditions are derived that involve the initial spatial distribution and speed ratios of the pursuers and the evader to guarantee capture, regardless of the strategy the faster evader adopts.
\end{enumerate}

This paper is organized as follows: Preliminaries and problem description are given in Section \ref{occupy}. In Section \ref{surround}, an encirclement algorithm is proposed to trap  the faster evader into the capture domain. In Section \ref{pursuit}, an algorithm is presented to balance between forming an encirclement of the evader and approaching it. Section \ref{discrete} provides sufficient capture conditions. Section \ref{simulation} and \ref{experiment} provide the details of simulation and experiment. Finally, we end the paper with some conclusions in Section \ref{conclusion}.

\section{Preliminaries and Problem Description}\label{occupy}

Before describing the problem to be studied in the paper, we introduce some notions including occupied angle,  coverage angle, overlapping occupied angle, and escapable angle.

\subsection{Occupied Angle}

For a group of $n$ pursuers, a local time-varying coordinate system $\Gamma$ centered at the faster evader' time-varying global coordinate $\mathbf{p}_e=(x_e,y_e)$ is shown in Fig. \ref{fig1}. The time-varying global coordinate of pursuer $i$ is denoted by $\mathbf{p}_i=(x_i,y_i)$, $i=1,2,\cdots,n$, and its corresponding coordinate in $\Gamma$ is represented by $\mathbf{p}_{ie}=(x_{ie},y_{ie})=(x_i-x_e,y_i-y_e)$.  In addition, the maximum speeds of pursuer $i$ and the faster evader are denoted by $V_{i}$ and $V_e$, respectively.  The pursuers are governed by a single-integrator kinematic model:
\begin{equation}\label{kinetic}
    \mathbf{\dot {p}}_{i} = \mathbf{v}_i, ||\mathbf{v}_i||_2 \le V_i,
\end{equation}
where $\mathbf{v}_i$ is the control input of the pursuer $i$.

\begin{remark}
Without the information of the moving direction or escape strategy of the evader, 
the pursuers can not predict the motion of the evader
based on a kinematics model of the evader. So, our capture strategy does not assume a kinematics model of the evader.
\end{remark}

\begin{figure}[!t]
\centering
\includegraphics[width=0.8\linewidth]{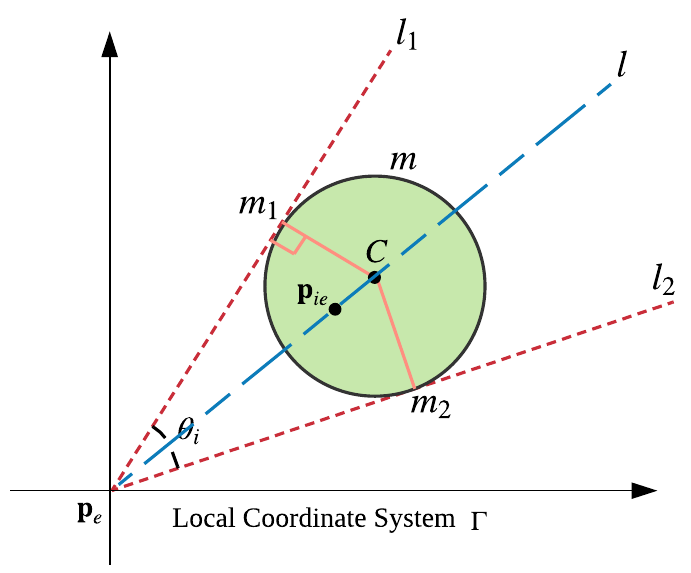}
\caption{The occupied angle}
\label{fig1}
\end{figure}

 (Suppose that both pursuer $i$ and the faster evader begin to move with their maximum speeds along fixed directions at time instant $t_1$. In the local coordinate system $\Gamma(t_1)$, if they meet at a point $\mathbf{m}$ in a finite time $t_1\!+\!T$ in $\Gamma(t_1)$, which is denoted by $\mathbf{m}=(x_m,y_m)$.  The point $\mathbf{m}$ must satisfy the following equation: 
\begin{equation}\label{a1}
\frac{\sqrt{(x_{m}-x_{ie})^2+(y_{m}-y_{ie})^2}}{\sqrt{x_{m}^2+y_{m}^2}} 
=\frac{V_{i} \cdot T}{V_{e} \cdot T}=\lambda_i,
\end{equation}
where $\lambda_i = \frac{V_{i}}{V_e} < 1$ is the speed ratio of pursuer $i$ and the faster evader. Equation \eqref{a1} can be transformed into
\begin{equation}\label{2}
\left(x_{m}-\frac{x_{ie}}{1-\lambda_i^2}\right)^2+\left(y_{m}-\frac{y_{ie}}{1-\lambda_i^2}\right)^2 
=\lambda_i^2\frac{x_{ie}^2+y_{ie}^2}{(1-\lambda_i^2)^2}.
\end{equation}

It is easy to conclude that $\mathbf{m}=(x_m,y_m)$ is on a circle centered at $\mathbf{C}=(\frac{x_{ie}}{1-\lambda_i^2},\frac{y_{ie}}{1-\lambda_i^2})$ with radius $R=\sqrt{x_{ie}^2+y_{ie}^2}\frac{\lambda_i}{1-\lambda_i^2}$ shown in Fig. \ref{fig1}. This circle is known as Apollonius circle \cite{isaacs1999differential}. Two tangent lines $l_1$ and $l_2$ are drawn to this circle from point $\mathbf{p}_{e}$. 
The occupied angle $\theta_i$ is defined as the angle between lines $l_1$ and $l_2$.  If the faster evader moves along a direction between tangent lines $l_1$ and $l_2$ such as $l$, pursuer $i$ can always move in a corresponding direction to capture the evader at a point on the circle. In other words, pursuer $i$ dominates $100\frac{\theta_i}{2\pi}$ percent of the free-moving directions of the evader. The tangent lines $l_1$ and $l_2$ intersect the circle at $\mathbf{m}_1$ and $\mathbf{m}_2$ shown in Fig. \ref{fig1}. According to the sine rule, we have the following relationship:
\begin{equation}\label{3}
\sin(\frac{\theta_i}{2}) = \frac{\mathbf{C} \mathbf{m}_1}{\mathbf{C} \mathbf{p}_e} = \lambda_i,\ \ \lambda_i<1.
\end{equation}

Then the occupied angle can be expressed as
\begin{equation}\label{4}
\theta_i=2\sin^{-1}(\lambda_i).
\end{equation}

The occupied angle $\theta_i$ is only related to the speed ratio of pursuer $i$ and the faster evader. 

\subsection{Coverage Angle, Overlapping Occupied Angle and Escapable Angle}

The local coordinate system $\Gamma$ can be transformed into a local polar coordinate system $L$ with polar coordinate $\mathbf{p}_{ie}(r_i,\alpha_i)$, where
\begin{align}\label{6}
r_i=& \sqrt{x_{ie}^2+y_{ie}^2}, \\
\alpha_i =& \left\{ \begin{array}{lll} \arctan(\frac{y_{ie}}{x_{ie}}), 
& x_{ie}>0,y_{ie}\ge0, \\ 2\pi + \arctan(\frac{y_{ie}}{x_{ie}}),  & x_{ie}>0,y_{ie}<0,\\
\pi + \arctan(\frac{y_{ie}}{x_{ie}}), & x_{ie}<0,\\
\end{array}\right. \label{7}
\end{align}
where $r_i$ is the polar radius and $\alpha_i \in [0,2\pi)$ is the polar angle in $L$. 

A group of $n$ pursuers that disperse counterclockwise in an ascending order with respect to polar angle $\alpha_{i}(\alpha_{i+1}\ge\alpha_{i})$. For adjacent pursuers $\mathbf{p}_{ie}$ and $\mathbf{p}_{(i+1)e}$ shown in Fig. \ref{fig2}, their occupied angles are denoted by $\theta_{i}$ and $\theta_{i+1}$, respectively. Then the coverage angle $\epsilon_{i,i+1}$ of adjacent pursuers is defined as
\begin{subequations}\label{8}
\begin{align}
&\epsilon_{i,i+1}= \alpha_{i+1}-\alpha_i-\frac{\theta_{i+1}+\theta_i}{2},i=1,2,\cdots,n-1,\\
&\epsilon_{n,1}= 2\pi+\alpha_{1}-\alpha_n-\frac{\theta_{1}+\theta_n}{2},i=n.
\end{align}
\end{subequations}

The coverage angle $\epsilon_{i,i+1}$ is divided into two categories: overlapping occupied angle if $\epsilon_{i,i+1} \le 0$ and escapable angle if $\epsilon_{i,i+1} > 0$. 
Both pursuers $\mathbf{p}_{ie}$ and $\mathbf{p}_{(i+1)e}$ can capture the evader if the evader moves along a direction within the overlapping occupied angle such as $\nu$ shown in Fig. \ref{fig2}. But the evader can escape from capture if the evader moves along a direction within the escapable angle because both pursuers $\mathbf{p}_{ie}$ and $\mathbf{p}_{(i+1)e}$ can not occupy this angle.

\begin{figure}[!t]
\centering
\includegraphics[width=0.8\linewidth]{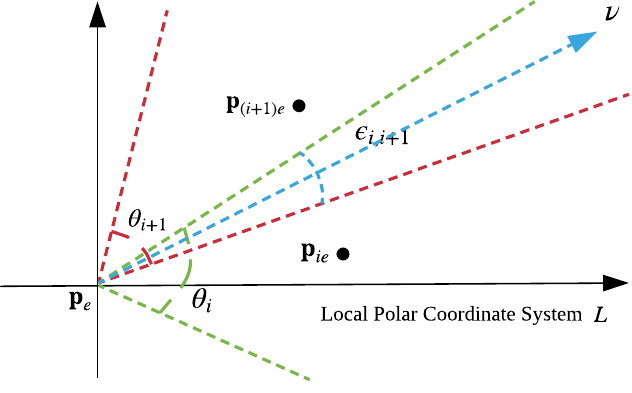}
\caption{The overlapping occupied angle}
\label{fig2}
\end{figure}
 
Therefore, for a group of $n$ pursuers, the group occupied angle $\theta_G$ is defined by 
\begin{equation}\label{9}
\theta_G = \sum_{i=1}^{n}\theta_i +  \sum_{i=1,\epsilon_{i,i+1}\le 0}^{n}\epsilon_{i,i+1}.
\end{equation}

The group occupied angle $\theta_G$ describes that a group of $n$ pursuers dominates $100\frac{\theta_G}{2\pi}$ percent of free-moving directions of the faster evader. 
Then the sum of escapable angles is equal to $\theta_E=2\pi-\theta_G$.
If all pursuers have the same occupied angle $\theta$, the minimum number of pursuers $n_{\min}$ required to occupy all the free-moving directions of the evader is obtained by 
\begin{equation}\label{11}
n_{\min}=\frac{2\pi-\sum_{i=1,\epsilon_{i,i+1}\le0}^{n}\epsilon_{i,i+1}}{\theta} > \frac{2\pi}{\theta}.
\end{equation}

It is easy to understand that the faster evader will be captured by a group of $n$ pursuers if all the free-moving directions are always occupied by the group. In other words, $\theta_G=2\pi$ is always satisfied. Therefore,
The groups success rate $P$ of capturing the faster evader is defined as
\begin{equation}\label{mg}
P= \frac{\theta_G}{2\pi}. 
\end{equation}

\subsection{{Problem Description}}

In this paper, we focus on a pursuit-evasion game with a faster free-moving evader, in a general setting:
\begin{enumerate}
    \item The evader moves faster than a group of pursuers;
    \item The pursuers have heterogeneous maximum speeds;
    \item The moving direction and escape strategy of the faster evader are unknown to the pursuers.
\end{enumerate}

\begin{defn}
The faster evader is said to be captured by a pursuer if the distance between the evader and the pursuer is less than the non-zero capture radius $d_c>0$.
\end{defn}

There are existing works studying the escape strategy of the evader \cite{Zha:2016hu, ramana2016pursuit}. In this paper, we focus on studying pursuit algorithms for the pursuers to capture the faster free-moving evader. More specifically, the problem under investigation is stated as follows:
For a group of $n$ pursuers and one faster free-moving evader, given initial positions $\mathbf{p}_i(0), i\!=\!1,\cdots,n$ and $\mathbf{p}_e(0)$ with $\|\mathbf{p}_i(0) \!-\! \mathbf{p}_e(0)\|_2 > d_c$, find  cooperative velocity control strategies $\mathbf{v}_i, i\!=\!1,\cdots,n$ such that there exist
at least one pursuer $i$ and time instant $t_c>0$ such that $\|\mathbf{p}_i(t_c) \!-\! \mathbf{p}_e(t_c)\| \le d_c$.

Since the moving direction of the evader is unknown to the pursuers, to capture the evader,
the pursuers should not only increase the group success rate $P$ by encircling the evader, but also approach the evader by decreasing their distances to the evader. 
Note that there is a trade-off between
encircling the evader by the pursuers and reducing pursuers' distances to the evader.
In the following sections \ref{surround} and \ref{pursuit}, an encirclement algorithm for increasing $P$, a hunting algorithm for decreasing $r_i$, and their trade-off are presented. 

\begin{figure}[t]
\centering
\includegraphics[width=0.8\linewidth]{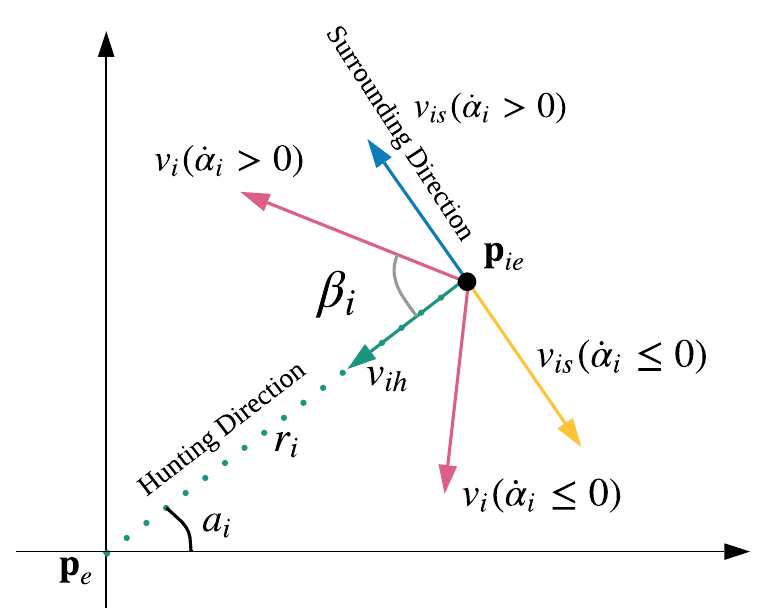}
\caption{The surrounding direction and hunting direction}
\label{fig6}
\end{figure}

\section{Encirclement Algorithm}\label{surround}

Each pursuer has two moving directions: surrounding direction and hunting direction shown in Fig. \ref{fig6}. The hunting direction points to the evader directly, and the surrounding direction is perpendicular to the hunting direction clockwise or counterclockwise. Therefore, the velocity $\mathbf{v}_{i}$ of pursuer $i$ can be projected onto the surrounding direction $\mathbf{v}_{is}$ and hunting direction $\mathbf{v}_{ih}$.  
This section focuses on the problem of increasing the group success rate $P$. Towards this goal, the strategy of the pursuers is to form an encirclement of the evader by cooperation. A distributed encirclement algorithm is proposed for $\mathbf{v}_{is}$ to form an encirclement.

\begin{defn}
The neighbors $\mathbf{p}_{(i+1)e}$ and $\mathbf{p}_{(i-1)e}$ are the 'nearest neighbors' of the pursuer $\mathbf{p}_{ie}$ if their polar angles satisfy $\alpha_{i-1} \le \alpha_{i} \le \alpha_{i+1}$.
\end{defn}

\begin{assumption}
Each pursuer $\mathbf{p}_{ie}$ can access the position information of the faster evader, itself and its two nearest neighbors $\mathbf{p}_{(i\!+\!1)e}$ and $\mathbf{p}_{(i\!-\!1)e}$ with $\alpha_{i\!-\!1} \le \alpha_{i} \le \alpha_{i\!+\!1}$.
\end{assumption}

A group of $n$ pursuers disperse counterclockwise in an ascending order with respect to the polar angle $\alpha_i(\alpha_{i+1}\ge\alpha_i), i=1,\cdots,n$ shown in Fig. \ref{fig5}.  The encirclement algorithm for forming and keeping an encirclement of the evader is designed as
\begin{equation}\label{13}
\dot{\alpha_{i}}=k_i(\epsilon_{i,i+1}-\epsilon_{i-1,i}),i=1,2\cdots,n,
\end{equation}
where $\epsilon_{0,1}=\epsilon_{n,1}$, $\epsilon_{n,n+1}=\epsilon_{n,1}$, and $k_i > 0$ is a surrounding coefficient.  $\dot{\alpha_{i}}>0$ means that pursuer $\mathbf{p}_{ie}$ surrounds the evader counterclockwise shown in Fig. \ref{fig6}, otherwise clockwise. The algorithm \eqref{13} aims to use the sum of the overlapping occupied angles to fill the sum of the escapable angles. Thus, the group occupied angle $\theta_G$ will increase.  

\begin{figure}[t]
\centering
\includegraphics[width=0.73\linewidth]{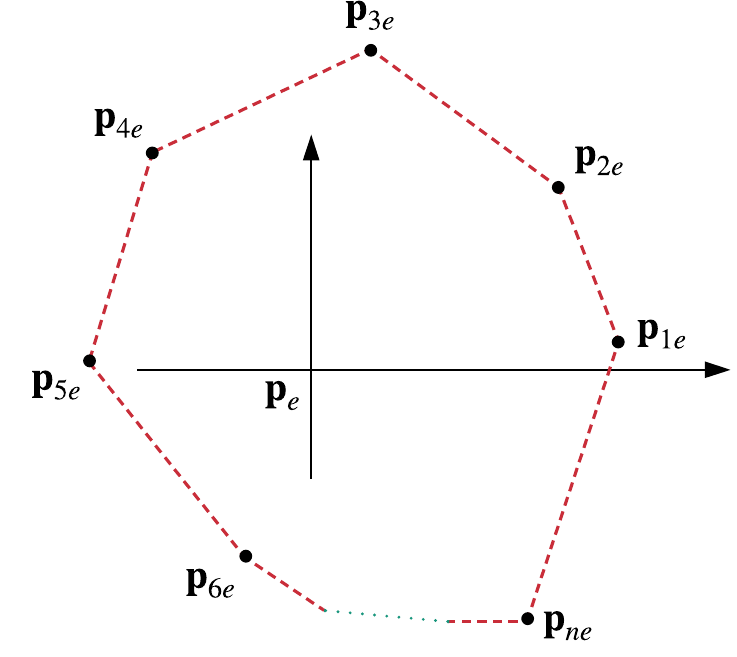}
\caption{The scenario of the pursuit-evasion game}
\label{fig5}
\end{figure}

\begin{theorem}\label{t3}
Consider a local polar coordinate system centered at a static evader' position at time instant $t_0$, the encirclement algorithm \eqref{13} can trap and keep the evader into the union of each pursuer' capture domain if 
\begin{equation}
 2\pi-\sum_{i=1}^{n}\theta_i \le 0.  
\end{equation}
\end{theorem}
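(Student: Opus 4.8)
The plan is to show that, in the static-evader frame fixed at time $t_0$, the closed loop \eqref{13} drives all coverage angles $\epsilon_{i,i+1}$ to a common nonpositive value, so that every coverage angle becomes an overlapping occupied angle, the group occupied angle in \eqref{9} reaches $\theta_G = 2\pi$, and consequently no escapable direction remains — which, by the discussion preceding \eqref{mg}, is exactly the statement that the evader is confined to the union of the pursuers' capture domains (for every heading the evader might take, some pursuer intercepts it on its Apollonius circle). Since the evader is static here, each occupied angle $\theta_i = 2\sin^{-1}(\lambda_i)$ is constant by \eqref{4}, so \eqref{13} is a linear time-invariant flow on the angles $\alpha_i$, and it suffices to analyze the induced dynamics of the $\epsilon_{i,i+1}$.

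First I would differentiate \eqref{8} along \eqref{13}. Writing $e_i := \epsilon_{i,i+1}$ (indices mod $n$, with $e_0=e_n$ and the wrap-around conventions of \eqref{13}) and using $\dot e_i = \dot\alpha_{i+1}-\dot\alpha_i$ together with $\dot\theta_i=0$, one obtains the closed system
\[
\dot e_i = k_{i+1}(e_{i+1}-e_i) - k_i(e_i-e_{i-1}), \qquad i=1,\dots,n,
\]
which is precisely a weighted Laplacian (consensus) flow on the cycle $1\!-\!2\!-\!\cdots\!-\!n\!-\!1$, the edge joining $i$ and $i{+}1$ carrying weight $k_{i+1}>0$; a one-line check shows this is $\dot e = -Me$ with $M=M^{\top}\succeq 0$, $M\mathbf 1=0$ and $\mathbf 1^{\top}M=0$. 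Two consequences follow. The quantity $\sum_{i=1}^{n}e_i$ is invariant, and summing \eqref{8} telescopes to $\sum_{i=1}^{n}e_i = 2\pi-\sum_{i=1}^{n}\theta_i$, which by hypothesis is $\le 0$. Since the cycle is connected, $0$ is a simple eigenvalue of $M$ with eigenvector $\mathbf 1$, so $e(t)\to \bar e\,\mathbf 1$ exponentially, where $\bar e = \tfrac1n\bigl(2\pi-\sum_{i=1}^{n}\theta_i\bigr)\le 0$; the Lyapunov function $V(e)=\|e-\bar e\mathbf 1\|_2^2$ and LaSalle give the same conclusion without spectral computations.

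Hence every coverage angle converges to the nonpositive value $\bar e$: all $\epsilon_{i,i+1}$ become overlapping occupied angles, no escapable angle survives, and by \eqref{9} this gives $\theta_G=2\pi$, i.e.\ the full encirclement that traps the evader. For the \emph{keep} part, note that if $e_k=\max_i e_i$ then $\dot e_k = k_{k+1}(e_{k+1}-e_k)-k_k(e_k-e_{k-1})\le 0$, so $\max_i e_i(t)$ is nonincreasing; once it is $\le 0$ it remains so, and when $2\pi-\sum_i\theta_i<0$ this occurs in finite time (in the boundary case $2\pi=\sum_i\theta_i$ it is attained asymptotically). The step I expect to be the main obstacle is the well-posedness bookkeeping: one must verify that along the flow the cyclic ordering $\alpha_i\le\alpha_{i+1}$ — hence the nearest-neighbor assignment and the validity of formula \eqref{8} itself — is preserved, which reduces to checking that the limiting gaps $\bar e+\tfrac12(\theta_i+\theta_{i+1})$ remain nonnegative (true exactly because the occupied angles are large enough for the hypothesis to hold), and to confirming that the quasi-static reduction to a fixed evader position at $t_0$ is the intended reading of the statement.
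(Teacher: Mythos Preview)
Your proposal is correct and follows essentially the same route as the paper: differentiate \eqref{8} along \eqref{13} to obtain the cyclic Laplacian system $\dot\epsilon=-M\epsilon$, use conservation of $\sum_i\epsilon_{i,i+1}=2\pi-\sum_i\theta_i$ and connectivity of the cycle to conclude consensus at $\bar e=\tfrac1n(2\pi-\sum_i\theta_i)\le 0$, hence $\theta_G\to 2\pi$. Your treatment of the ``keep'' part via the maximum principle on $\max_i e_i$ is a slight sharpening of the paper's argument (which simply notes $\dot\theta_G=0$ once all $\epsilon_{i,i+1}\le 0$), and your flag about preservation of the cyclic ordering is a legitimate well-posedness point that the paper does not address.
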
  
\begin{remark}
The number of the pursuers $n$ and speed ratio $\theta_i$ are the known information and remain unchanged throughout the pursuit-evasion game. Hence, $ 2\pi-\sum\limits_{i=1}^{n}\theta_i$ is constant.
\end{remark}

\begin{proof}
By combining \eqref{8} with \eqref{13}, it yields
\begin{equation}\label{23}
\begin{array}{ll}
\dot{\epsilon}_{i,i+1}&= \dot{\alpha}_{i+1}-\dot{\alpha_i}\\
&=k_{i+1}(\epsilon_{i+1,i+2}-\epsilon_{i,i+1})-k_{i}(\epsilon_{i,i+1}-\epsilon_{i-1,i}) \\
&=-(k_{i+1}+k_i)\epsilon_{i,i+1}+k_{i+1}\epsilon_{i+1,i+2}+k_i\epsilon_{i-1,i} \\
& (i=1,2,\cdots,n),
\end{array} \\
\end{equation}
where $\epsilon_{0,1}=\epsilon_{n,1}$, $\epsilon_{n,n+1}=\epsilon_{n,1}$, $\epsilon_{n+1,n+2}=\epsilon_{1,2}$ and $k_{n+1}=k_1$.
Let vector $\mathbf{\epsilon}=(\epsilon_{1,2},\epsilon_{2,3},\cdots,\epsilon_{n,1})^T \in \mathbb{R}^n$, then, \eqref{23} becomes
\begin{equation}\label{24}
\mathbf{\dot{\epsilon}= -M \epsilon},
\end{equation}
where
\begin{equation}\label{25}
\mathbf{M}\!=\!  \left[\begin{array}{ccccc}
	k_1\!+\! k_2  & -k_2 & 0 & \cdots & -k_1\\
	-k_2  &  k_2\!+\!k_3 & -k_3 & \cdots & 0  \\
	\cdots & \cdots & \cdots & \cdots & \cdots \\
	0 & \cdots & -k_{n-1} & k_{n\!-\!1}\!+\!k_{n} & -k_n \\
	-k_1 & 0 & \cdots & -k_n & k_n\!+\!k_1
	\end{array}\right]
\end{equation}
is symmetrical and the elements denoted by ellipsis are $0$. If the pursuit-evasion game begins at time instant $t_0$, the sum of coverage angles $\sum\limits_{i=1}^{n}\epsilon_{i,i+1}$ satisfies
\begin{equation}\label{26}
\sum_{i=1}^{n}\dot{\epsilon}_{i,i+1}(t)=0, t \ge t_0.
\end{equation}

Then, it yields
\begin{equation}
\sum_{i=1}^{n}{\epsilon}_{i,i+1}(t)= \sum_{i=1}^{n}{\epsilon}_{i,i+1}(t_0), t \ge t_0.
\end{equation}

Substituting \eqref{8} into \eqref{26} gives rise to
\begin{equation}\label{27}
\sum_{i=1}^{n}{\epsilon}_{i,i+1}(t) = 2\pi-\sum_{i=1}^{n}\theta_i,t \ge t_0.   
\end{equation}

The matrix $\mathbf{M}=[m_{ij}]$ satisfies
\begin{equation}\label{p1}
   M(i,j) = \left\{ \begin{array}{lll}
-m_{ij}, &  i \neq j, \\ \sum_{i=1,i\neq j}^{n}m_{ij}, & i=j,
\end{array}\right.
\end{equation}
where $m_{ij}>0$. $\mathbf{M}$ has an eigenvalue $0$ with the associated eigenvector $\mathbf{1}_n$, where $\mathbf{1}_n$ is the $n \times 1$ column vector $[1,1,\cdots,1]^T$, and all other eigenvalues have positive real parts \cite{merris1994laplacian}. In addition, all elements in vector $\epsilon$ will be equal \cite{Wei:2017tu} as $t \rightarrow \infty $
\begin{equation}\label{p2}
\lim_{t\rightarrow \infty}\epsilon_{1,2}(t)=\lim_{t\rightarrow \infty} \epsilon_{2,3}(t)=\cdots=\lim_{t\rightarrow \infty} \epsilon_{n,1}(t).
\end{equation}
Combining \eqref{27} with \eqref{p2} leads to
\begin{equation}\label{28}
\lim_{t \rightarrow \infty}\epsilon_{i,i+1}(t) = \frac{1}{n}(2\pi-\sum_{i=1}^{n}\theta_i),i=1,2,\cdots,n,
\end{equation}
where $\epsilon_{n,n+1}=\epsilon_{n,1}$.

Since $2\pi-\sum\limits_{i=1}^{n}\theta_i \le 0$,  it yields $\lim_{t \rightarrow \infty}\epsilon_{i,i+1}(t)\le 0 (i=1,2,\cdots,n)$ from \eqref{28}. Combining \eqref{9} with \eqref{28} give rise to
\begin{equation}\label{32}
\lim_{t\rightarrow \infty}\theta_G(t) = \sum_{i=1}^{n}\theta_i +  \lim_{t\rightarrow \infty}\sum_{i=1,\epsilon_{i,i+1}(t)\le 0}^{n}\epsilon_{i,i+1}(t) = 2\pi.
\end{equation}

The group occupied angle $\theta_G=2\pi$ means that the evader is trapped into the union of each pursuer' capture domain. The maximum value of $P=1$ \eqref{mg} is achieved.  In addition, when the group occupied angle $\theta_G=2\pi$ and all coverage angles satisfy $\epsilon_{i,i+1}(t)\le0, i=1,2,\cdots,n$, we have $\dot \theta_G=0$, and the group occupied angle will remain as $2\pi$.  Therefore, the algorithm \eqref{13} can trap and keep the evader within the union of each pursuer' capture domain.
\end{proof}

For each pursuer $\mathbf{p}_{ie}(r_i,\alpha_i)$ in local polar coordinate system $L$ shown in Fig. \ref{fig6}, according to the relationship $||\mathbf{v}_{is}||_2=|\dot\alpha_i|r_i, (\dot\alpha_i \in \mathbb{R})$ between linear velocity and angular velocity, the surrounding velocity control $\mathbf{v}_{is}$ under encirclement algorithm \eqref{13} is obtained by
\begin{equation}\label{sp1}
\begin{array}{ll}
      \mathbf{v}_{is}&=\dot\alpha_i r_i  (-\sin\alpha_i,\cos\alpha_i) \\
      &=k_ir_i(\epsilon_{i,i+1}-\epsilon_{i-1,i}) (-\sin\alpha_i,\cos\alpha_i). 
\end{array}
\end{equation}

\section{Velocity Control and Trade-off Algorithm}\label{pursuit}

\subsection{Velocity Control}

The ultimate aim of the pursuers is to capture the faster evader. 
To decrease their distances to the evader, each pursuer should try to move closer to the evader along the hunting direction shown in Fig. \ref{fig6}. For a group of $n$ pursuers, the hunting algorithm is designed as 
\begin{equation}\label{33}
\dot{r}_i=-h_ir_i, i=1,2,\cdots,n,
\end{equation}
where $h_i > 0$ is the hunting coefficient. The hunting velocity along the hunting direction shown in Fig. \ref{fig6} is denoted by $\mathbf{v}_{ih}=(v_{ihx},v_{ihy})=||\mathbf{v}_{ih}||_2(-\cos\alpha_i,-\sin\alpha_i)$. Since $\dot{r}_i=-||\mathbf{v}_{ih}||_2$, it yields
\begin{equation}\label{34}
\mathbf{v}_{ih}=-\dot{r}_i (-\cos\alpha_i,-\sin\alpha_i)=-h_ir_i(\cos\alpha_i,\sin\alpha_i).
\end{equation}

Therefore, for any pursuer $\mathbf{p}_{ie}(\alpha_i,r_i)$, its distributed pursuit velocity control law can be obtained by
\begin{equation}\label{sp2}
\begin{array}{ll}
\mathbf{v}_{i}&=\mathbf{v}_{is}+\mathbf{v}_{ih} \\
& =(-k_ir_i(\epsilon_{i,i+1}-\epsilon_{i-1,i}) \sin\alpha_i-h_ir_i\cos\alpha_i, \\
& k_ir_i(\epsilon_{i,i+1}-\epsilon_{i-1,i}) \cos\alpha_i-h_ir_i\sin\alpha_i).
\end{array}
\end{equation} 

The velocity control law \eqref{sp2} is designed based on \eqref{sp1} and \eqref{34}. Under velocity control law \eqref{sp2},
the group of $n$ pursuers not only try to
trap and keep the evader within the union of each pursuer' capture domain, but also try to approach the evader.

Note that
$\mathbf{v}_{is}$ is for forming an encirclement to increase group success rate $P$  and $\mathbf{v}_{ih}$  is for approaching the evader to decrease $r_i$. Since $\mathbf{v}_{is}$ is perpendicular to the  $\mathbf{v}_{ih}$, forming an encirclement and approaching the evader are two independent processes. A trade-off algorithm is needed to choose the values of surrounding coefficient $k_i$ and hunting coefficient $h_i$ to enable the capturing of the faster free-moving evader.

\subsection{Trade-off Algorithm}

In the pursuit-evasion game, 
the pursuers move with their maximum speed denoted by $V_i=\sqrt{||\mathbf{v}_{is}||_2^2+||\mathbf{v}_{ih}||_2^2}$. From  \eqref{sp1} and \eqref{34}, we have
\begin{align}\label{sp3}
    ||\mathbf{v}_{is}||_2= &k_ir_i|\epsilon_{i,i+1}-\epsilon_{i-1,i}|={V}_{i}\sin\beta_i, \\
    ||\mathbf{v}_{ih}||_2=&h_ir_i={V}_{i} \cos\beta_i, \label{spp3}
\end{align}
where $\beta_i \in [0,\frac{\pi}{2}]$ is the trade-off coefficient shown in Fig. \ref{fig6}.  Then, it yields
\begin{align}\label{kef}
k_i=&\frac{{V}_{i}\sin\beta_i}{r_i|\epsilon_{i,i+1}-\epsilon_{i-1,i}|}, \ \  \epsilon_{i,i+1}-\epsilon_{i-1,i} \not = 0, \ \ \beta_i \not= 0, \\
h_i=&\frac{{V}_{i}\cos\beta_i}{r_i}. \label{hef}
\end{align}

To ensure the property of \textbf{Theorem} \ref{t3}, $k_i>0$ is required. For the case of $|\epsilon_{i,i+1}-\epsilon_{i-1,i}|, \beta_i = 0$, the surrounding coefficient is set as $k_i=1$.   Note that $k_i$ and $h_i$ in \eqref{kef} and \eqref{hef} are time-varying parameters.

The value of surrounding coefficient $k_i$ and hunting coefficient $h_i$ are determined by the trade-off coefficient $\beta_i$ shown in \eqref{kef} and \eqref{hef}. 
Hence, the trade-off problem becomes how to design the trade-off coefficient $\beta_i$.  The trade-off algorithm for designing $\beta_i$ should have two characteristics:
\begin{enumerate}[(i)]
   \item $\beta_i(|\dot\alpha_i|,r_i)$ is a monotonic increasing function with respect to $|\dot\alpha_i|$;
   \item $\beta_i(|\dot\alpha_i|,r_i)$ is a concave function with respect to $r_i$.
\end{enumerate}

Next, the analysis for the above two characteristics is given:
\begin{enumerate}
\item
When $|\dot\alpha_i|=|k_i(\epsilon_{i,i+1}-\epsilon_{i-1,i})|$ is small, we have $\epsilon_{i,i+1} \rightarrow \epsilon_{i-1,i}$.  Small $|\dot\alpha_i|$ means pursuer $i$ 
tends not to change its polar angle, so pursuer $i$ should
focus on the hunting with small $\beta_i$. But when $|\dot\alpha_i|=|k_i(\epsilon_{i,i+1}-\epsilon_{i-1,i})|$ is large, the gap between coverage angles $\epsilon_{i,i+1}$ and $\epsilon_{i-1,i}$ is large.  
Pursuer $i$ should focus on the surrounding to decrease this gap with large $\beta_i$. Therefore, $\beta_i(|\dot\alpha_i|,r_i)$ should be designed as a monotonic increasing function with respect to $|\dot\alpha_i|$. 
From \eqref{8} and \eqref{13}, it yields
\begin{equation}\label{36}
|\dot\alpha_i| \le k_i \cdot (2\pi-\frac{\theta_{i+1}-\theta_{i-1}}{2}).
\end{equation}

\item
In the pursuit-evasion game, if a pursuer is close to the faster evader, it should focus on hunting because it has better chance to capture the evader. If a pursuer is far from the faster evader, it should also focus on hunting because it needs to increase its chance to capture the evader by decreasing its distance to the evader. Therefore, the $\beta_i(|\dot\alpha_i|,r_i)$ should be designed as a concave function with respect to $r_i$. 

Based on the above two characteristics, a trade-off algorithm for  $\beta_i$ is designed as
\begin{equation}\label{37}
\beta_i(|\dot\alpha_i|,r_i)=\frac{\pi}{2}(1-e^{-\delta_i\gamma_i}), 
\end{equation}
where surrounding factor $\delta_i \in [0,1]$ and hunting factor $\gamma_i \in [0,1]$ are
\begin{equation}\label{38}
\delta_i=\frac{2|\epsilon_{i,i+1}-\epsilon_{i-1,i}|}{4\pi-\theta_{i+1}+\theta_{i-1}}.
\end{equation}
\begin{equation}
\begin{array}{ll}
&\gamma_i=  sin(\pi(\frac{r_i}{r_{i}+ r_{i-1}+r_{i+1}})^{log_{3}^2}), \\
& r_{i}+ r_{i-1}+r_{i+1} \neq 0. \label{39}
\end{array}
\end{equation}
\end{enumerate}

\begin{remark}
{Note that the choice of trade-off coefficient $\beta_i$ is flexible. Here, we provide one solution \eqref{37}.}
\end{remark}

In a pursuit-evasion game, the aim is to capture the evader rather than purely surround the evader. Therefore, even if the surrounding factor $\delta_i$ equals its maximum value ($\max\delta_i=1$), the trade-off coefficient $\beta_i$ is still less than $\frac{\pi}{2}$. The pursuer $i$ will always has a component along the hunting direction, thus $h_i>0$ in \eqref{hef} is satisfied.

\section{Capture Condition}\label{discrete}

This section aims to find sufficient capture conditions, which guarantee that the group occupied angle $\theta_G$ equals $2\pi$ at any time instant $t$.

\begin{defn}\label{def1}
If the pursuer $\mathbf{p}_i$ can obtain the position information of the pursuer $\mathbf{p}_j$, pursuer $\mathbf{p}_j$ is called a neighbor of the pursuer $\mathbf{p}_i$. All the neighbors of the pursuer $\mathbf{p}_i$ are denoted by the set $\Omega_i$.
\end{defn}

\begin{defn}\label{def2}
A $n$-pursuer polygon shown in Fig. \ref{included} is a polygon formed by the $n$ pursuers, whose vertices are the pursuers' positions. Each pursuer $\mathbf{p}_i$ has two edges connecting to other two pursuers $\mathbf{p}_z$ and $\mathbf{p}_k$. The polygon maintenance set $S_{i}$ for the pursuer $\mathbf{p}_i$ is defined as $S_{i}=\{\mathbf{p}_z,\mathbf{p}_k\}$.
\end{defn}

\begin{figure}[!t]
\centering
\includegraphics[width=0.6\linewidth]{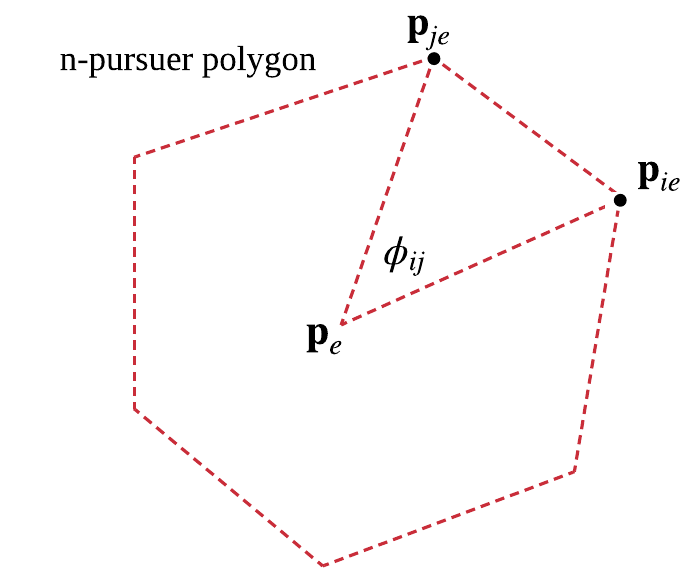}
\caption{$n$-pursuer polygon and included angle. }
\label{included}
\end{figure}

\begin{lemma}\label{lemma1}
In the local polar coordinate system $L$ centered at the evader' position, the included angle $\phi_{ij}\le \pi$ formed by two pursuers and the evader shown in Fig. \ref{included} satisfies $\phi_{ij}- \frac{\theta_{i}+\theta_{j}}{2} \le 0$ if the distance $d_{i,j}$ between the pursuers $\mathbf{p}_{ie}(\alpha_i,r_i)$ and $\mathbf{p}_{je}(\alpha_{j},r_{j})$ satisfies
\begin{equation}\label{dis}
    d_{i,j} \le 2d_c\min (\lambda_i,\lambda_{j}).
\end{equation}
\end{lemma}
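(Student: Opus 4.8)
The plan is to work in the local polar coordinate system $L$ centered at the evader and translate the geometric hypothesis $d_{i,j}\le 2d_c\min(\lambda_i,\lambda_j)$ into a bound on the included angle $\phi_{ij}$. The key geometric fact is that each pursuer $\mathbf{p}_{ie}$ controls a cone of half-angle $\theta_i/2$ about its own bearing $\alpha_i$ (this is exactly \eqref{4}, $\theta_i=2\sin^{-1}(\lambda_i)$). What I want to show is that the two cones belonging to $\mathbf{p}_i$ and $\mathbf{p}_j$ overlap, i.e.\ $\phi_{ij}\le \tfrac{\theta_i}{2}+\tfrac{\theta_j}{2}$, whenever the two pursuers are close enough relative to the capture radius. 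So the first step is to recall that capture along a direction $\nu$ succeeds when the evader's straight-line path through $\mathbf{p}_e$ in direction $\nu$ passes within distance $d_c$ of the Apollonius circle of one of the pursuers (or, more simply, when the pursuer can intercept the path at range $\le d_c$); the boundary directions of pursuer $i$'s cone are precisely the tangent lines $l_1,l_2$ from $\mathbf{p}_e$ to its Apollonius circle.

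Next I would set up the elementary trigonometry. Consider the triangle with vertices $\mathbf{p}_e$, $\mathbf{p}_{ie}$, and a generic intercept point $\mathbf{m}$ on the ray from $\mathbf{p}_e$ in direction $\nu$, where $\|\mathbf{m}-\mathbf{p}_e\|$ is to be kept $\le d_c$ so the interception counts as a capture. Using the law of sines on this triangle together with the speed-ratio constraint \eqref{a1}, the set of directions $\nu$ for which pursuer $i$ can force interception within range $d_c$ is an angular interval about $\alpha_i$ whose half-width $\psi_i$ satisfies $\sin\psi_i \ge$ (some expression in $\lambda_i$, $r_i$, $d_c$); in the limiting/worst case where $r_i$ is large this interval shrinks to the pure occupied angle with half-width $\sin^{-1}\lambda_i=\theta_i/2$. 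The cleanest route is to bound the \emph{midpoint gap}: if $\phi_{ij}$ is the angle $\angle\,\mathbf{p}_{ie}\,\mathbf{p}_e\,\mathbf{p}_{je}$, then by the law of cosines $d_{i,j}^2 = r_i^2+r_j^2-2r_ir_j\cos\phi_{ij}$, so $d_{i,j}\ge (r_i+r_j)\sin(\phi_{ij}/2)$ is false in general but $d_{i,j}\ge \min(r_i,r_j)\cdot 2\sin(\phi_{ij}/2)$ is the kind of inequality I want — combined with the fact that a pursuer at range $r$ with speed ratio $\lambda$ and capture radius $d_c$ enlarges its effective cone half-angle from $\sin^{-1}\lambda$ by an amount controlled by $d_c/r$, one gets $\phi_{ij}-\tfrac{\theta_i+\theta_j}{2}\le 0$ once $d_{i,j}\le 2d_c\min(\lambda_i,\lambda_j)$. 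I would carry this out by reducing to the symmetric extremal configuration (the two pursuers equidistant from the evader at the same range $r$, with the direction $\nu$ bisecting $\phi_{ij}$) and checking that the worst case over $r$ gives exactly the stated bound.

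The main obstacle I anticipate is making precise the claim that a nonzero capture radius \emph{widens} the occupied cone and quantifying that widening correctly: the occupied angle $\theta_i=2\sin^{-1}\lambda_i$ is derived in the excerpt for point capture ($d_c=0$), and here the whole point of Lemma~\ref{lemma1} is that with $d_c>0$ two pursuers can jointly cover a gap slightly larger than $\tfrac{\theta_i+\theta_j}{2}$. So the real work is to redo the Apollonius-circle/tangent-line computation of Section~II allowing interception at range up to $d_c$ rather than at a point, extract the corrected half-angle as a function of $(\lambda, r, d_c)$, show it is decreasing in $r$ (so the binding case is the largest relevant $r$, which is governed by $d_{i,j}$ and $\phi_{ij}$ through the law of cosines), and then verify the algebra closes to give the clean sufficient condition $d_{i,j}\le 2d_c\min(\lambda_i,\lambda_j)$. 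Everything after that is routine trigonometric manipulation and the use of $\phi_{ij}\le\pi$ to rule out the reflex case.
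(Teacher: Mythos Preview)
Your proposal rests on a misreading of the role of $d_c$ in the hypothesis. You interpret the lemma as saying that a nonzero capture radius \emph{widens} each pursuer's occupied cone beyond $\theta_i/2$, and you plan to redo the Apollonius computation allowing interception at range $\le d_c$ to quantify that widening. That is not what is happening. The occupied angles $\theta_i=2\sin^{-1}\lambda_i$ are used unchanged throughout the paper; the capture radius $d_c$ enters only through the implicit lower bound $r_i,r_j\ge d_c$, which holds simply because the evader has not yet been captured. No widened cone is ever computed or needed.

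The paper's actual argument is a three-line law-of-cosines calculation. From \eqref{4} one has $\cos\theta_i=1-2\lambda_i^2$, so
\[
r_i^2+r_j^2-2r_ir_j\cos\theta_i=(r_i-r_j)^2+4r_ir_j\lambda_i^2\ge 4r_ir_j\lambda_i^2\ge 4d_c^2\lambda_i^2,
\]
the last step using $r_i,r_j\ge d_c$. Since $d_{i,j}^2\le 4d_c^2\lambda_i^2$ by hypothesis, the law of cosines gives
\[
\cos\phi_{ij}=\frac{r_i^2+r_j^2-d_{i,j}^2}{2r_ir_j}\ge \frac{r_i^2+r_j^2-(r_i^2+r_j^2-2r_ir_j\cos\theta_i)}{2r_ir_j}=\cos\theta_i,
\]
hence $\phi_{ij}\le\theta_i$, and by symmetry $\phi_{ij}\le\theta_j$. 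This is in fact stronger than the stated conclusion $\phi_{ij}\le\tfrac{\theta_i+\theta_j}{2}$. Your plan to analyze a ``symmetric extremal configuration'' and optimize a corrected half-angle over $r$ would, if carried out, produce a much messier sufficient condition and misses this direct route entirely.
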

\begin{proof}
From \eqref{4}, it yields $\cos \theta_i= 1 - 2\lambda_i^2$ and $\theta_i<\pi$. Then, we have $r_i^2+r_{j}^2-2r_ir_{j}\cos \theta_i = (r_i-r_{j})^2 + 4r_ir_{j}\lambda_i^2
\ge 4r_ir_{j}\lambda_i^2 \ge 4d_c^2\lambda_i^2$. For the included angle $\phi_{ij}$:
\begin{equation}
\begin{array}{ll}
    \cos \phi_{ij} &= \frac{r_i^2+r_{j}^2-d_{i,j}^2}{2r_ir_{j}}  \\ &\ge \frac{r_i^2+r_{j}^2-4d_c^2\lambda_i^2}{2r_ir_{j}} \\
    & \ge \frac{r_i^2+r_{j}^2-(r_i^2+r_{j}^2-2r_ir_{j}\cos \theta_i)}{2r_ir_{j}}=\cos \theta_i.
\end{array}
\end{equation}
Therefore, $\phi_{ij} \le  \theta_i$. Similarly,  $\phi_{ij} \le  \theta_{j}$. It is easy to know the coverage angle $\epsilon_{i,j}=\phi_{ij}- \frac{\theta_{i}+\theta_{j}}{2} \le 0$. 
\end{proof}

\begin{lemma}\label{lemma2}
For a faster evader $\mathbf{p}_e$ which is inside a $n$-pursuer polygon, if the edge length of the polygon is no more than $2d_c\lambda_{\min}$, where $\lambda_{\min} = \min(\lambda_1,\cdots,\lambda_n)$ is the minimum speed ratio among all the pursuers, the group occupied angle $\theta_G$ equals $2\pi$.
\end{lemma}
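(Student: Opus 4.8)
The plan is to prove directly that every direction issuing from the evader lies in the occupied angle of at least one pursuer; since the escapable angle $\theta_E=2\pi-\theta_G$ is the total angular measure of the directions occupied by no pursuer, this is exactly the claim $\theta_G=2\pi$. First I would make explicit a geometric fact that is already implicit in the construction of the occupied angle in Section~\ref{occupy}: the Apollonius circle of pursuer $i$ is centred on the ray from $\mathbf{p}_e$ through $\mathbf{p}_{ie}$, so the two tangent lines $l_1,l_2$ are symmetric about that ray, and hence the occupied angle of pursuer $i$ is precisely the set of directions whose angular distance to the polar angle $\alpha_i$ is at most $\theta_i/2=\sin^{-1}(\lambda_i)$. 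In other words, a direction $\psi$ is occupied by pursuer $i$ if and only if $d(\psi,\alpha_i)\le\theta_i/2$, where $d(\cdot,\cdot)$ denotes angular distance on the circle.

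Next I would fix an arbitrary direction $\psi\in[0,2\pi)$ and follow the ray from $\mathbf{p}_e$ in direction $\psi$. Because $\mathbf{p}_e$ lies in the interior of the $n$-pursuer polygon, this ray must leave the polygon and therefore cross one of its edges, say the edge whose endpoints are pursuers $\mathbf{p}_a$ and $\mathbf{p}_b$ (so that $\mathbf{p}_a$ and $\mathbf{p}_b$ are polygon-adjacent, cf. the maintenance sets of \textbf{Definition}~\ref{def2}). As a point slides along the segment $\mathbf{p}_a\mathbf{p}_b$, its bearing seen from $\mathbf{p}_e$ turns monotonically from $\alpha_a$ to $\alpha_b$ through the shorter arc, whose measure is the included angle $\phi_{ab}=\angle\mathbf{p}_a\mathbf{p}_e\mathbf{p}_b<\pi$; in particular $\psi$ lies on that arc, so $d(\psi,\alpha_a)+d(\psi,\alpha_b)=\phi_{ab}$.

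To conclude, I would apply \textbf{Lemma}~\ref{lemma1} to the edge $\mathbf{p}_a\mathbf{p}_b$. We may assume no pursuer is already within $d_c$ of the evader (otherwise capture has occurred and there is nothing left to prove), so \textbf{Lemma}~\ref{lemma1} is applicable, and since the edge length satisfies $d_{a,b}\le 2d_c\lambda_{\min}\le 2d_c\min(\lambda_a,\lambda_b)$ we get $\phi_{ab}\le\frac{\theta_a+\theta_b}{2}$. Combining this with $d(\psi,\alpha_a)+d(\psi,\alpha_b)=\phi_{ab}$ yields $d(\psi,\alpha_a)+d(\psi,\alpha_b)\le\frac{\theta_a}{2}+\frac{\theta_b}{2}$, so at least one of $d(\psi,\alpha_a)\le\theta_a/2$ or $d(\psi,\alpha_b)\le\theta_b/2$ holds; in either case $\psi$ is occupied by $\mathbf{p}_a$ or by $\mathbf{p}_b$. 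As $\psi$ was arbitrary, the whole circle of directions is occupied, hence $\theta_E=0$ and $\theta_G=2\pi$.

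The only step I expect to require care is the topological claim that a ray from an interior point of the polygon genuinely crosses an \emph{edge} (rather than running along the boundary, or exiting exactly through a vertex): for a general, possibly non-convex, simple polygon this rests on the Jordan-curve property and needs a short justification, and the measure-zero case where $\psi$ points exactly at some pursuer $\mathbf{p}_a$ should be dispatched separately (then $\psi=\alpha_a$ and occupation by $\mathbf{p}_a$ is immediate). A cleaner but less general alternative, valid when the polygon is convex, is to note that the $n$ included angles over the edges then sum to exactly $2\pi$, so \textbf{Lemma}~\ref{lemma1} forces $\sum_i\theta_i\ge 2\pi$ and makes every polygon-adjacent coverage angle non-positive; I would still prefer the ray argument because it avoids assuming convexity and directly exhibits the covering of all directions.
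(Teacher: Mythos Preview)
Your proof is correct. The paper's own argument is shorter but less careful: it applies \textbf{Lemma}~\ref{lemma1} to every polygon edge to get $\epsilon_{i,j}\le 0$ for each edge, and then appeals directly to \eqref{8}--\eqref{9} to conclude $\theta_G=2\pi$. That last step tacitly identifies polygon-adjacency with polar-angle adjacency (the coverage angles in \eqref{8}--\eqref{9} are defined for pursuers that are consecutive in the polar-angle ordering), which is automatic only when the polygon is star-shaped with respect to the evader---in particular when it is convex. Your ray-crossing argument sidesteps this identification and shows directly that every direction is covered, so it is genuinely more general; the ``cleaner but less general alternative'' you mention at the end is essentially the paper's route. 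The price you pay is having to invoke the Jordan-curve property and dispose of the vertex-hitting case separately, neither of which the paper addresses.
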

\begin{proof}

The distance between the faster evader $\mathbf{p}_e$ and any pursuer is larger than $d_c$ if the evader is not captured. For the local polar coordinate system $L$ centered at the evader' position $\mathbf{p}_e$,  each edge in the $n$-pursuer polygon has two vertices where two pursuers $\mathbf{p}_{ie}(\alpha_i,r_i)$ and $\mathbf{p}_{je}(\alpha_{j},r_{j})$ are situated. Since the distance $d_{i,j}$ between $\mathbf{p}_{ie}$ and $\mathbf{p}_{je}$ satisfies $d_{i,j} \le 2d_c \lambda_{\min}$, the coverage angle $\epsilon_{i,j}$ satisfies $\epsilon_{i,j}=\phi_{ij} - \frac{\theta_i+\theta_{j}}{2}  \le 0$ from Lemma \ref{lemma1}. Therefore, all the coverage angles $\epsilon_{i,j}$  are no more than $0$, then the group occupied angle $\theta_G$ equals $2\pi$ from \eqref{8} and \eqref{9}.
\end{proof}

To ensure the length of any edge in the $n$-pursuer polygon is no more than $2d_c\lambda_{\min}$, a distance maintenance set $D_i^m$ for the pursuer $\mathbf{p}_i$ is defined as
\begin{equation}
    D_i^m = \{ \mathbf{p}_j :  R_c \le ||\mathbf{p}_i-\mathbf{p}_j||_2 < R_f=2d_c\lambda_{\min}, \mathbf{p}_j \in S_i \},  
\end{equation}
where $S_i$ is the polygon maintenance set defined in the \textbf{Definition \ref{def2}}.  The following potential function is adopted \cite{han2015formation} to maintain the distance between $\mathbf{p}_i$ and any pursuer $\mathbf{p}_j \in D_i^m$
\begin{equation}\label{distance}
Q_{ij} = \left\{\begin{array}{lll} \left(\frac{||\mathbf{p}_i-\mathbf{p}_j||_2^2-(2R_c^2-R_f^2)}{||\mathbf{p}_i-\mathbf{p}_j||_2^2-R_f^2}\right)^2-1,  \mathbf{p}_j \in D_i^m, \\ 0, \ \ \text{otherwise},
\end{array}\right.
\end{equation}
where $R_f>R_c>0$. $\frac{\partial Q_{ij}}{\partial \mathbf{P}_i}=0$ if $\mathbf{p}_j \not \in D_i^m$, and $\frac{\partial Q_{ij}}{\partial \mathbf{P}_i}\not=0$ if $\mathbf{p}_j  \in D_i^m$.

\begin{figure}[!t]
\centering
\includegraphics[width=0.8\linewidth]{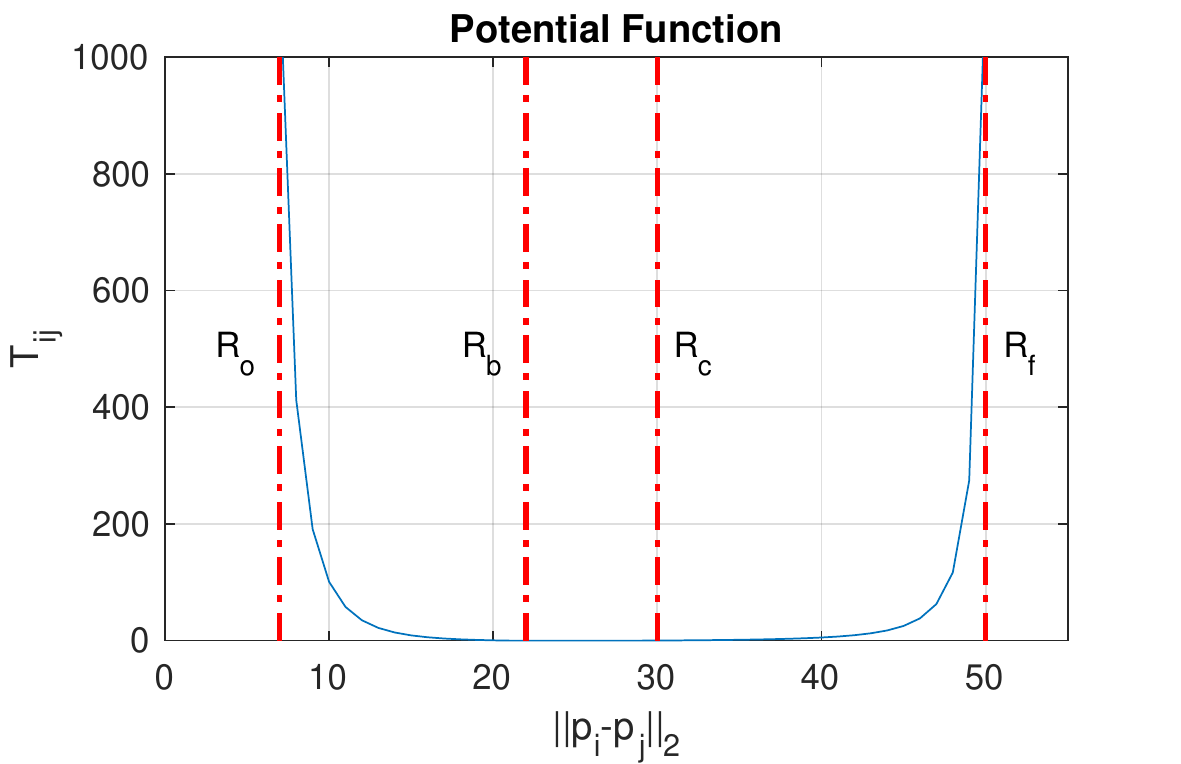}
\caption{Potential function of inter-collision avoidance and distance maintenance. }
\label{potential1}
\end{figure}

Let
\begin{equation}
    \mathbf{s}_i = -\sum\limits_{\mathbf{p}_j \in S_i} \frac{\partial Q_{ij}}{\partial \mathbf{p}_i}.
\end{equation}

If the pursuers are required to keep a specified safe distance $R_o$ from each other, an inter-collision avoidance set is defined for the pursuer $\mathbf{p}_i$
\begin{equation}
        D_i^o = \{ \mathbf{p}_j:  R_o < ||\mathbf{p}_i-\mathbf{p}_j||_2 \le R_b, \mathbf{p}_j \in \Omega_i \},  
\end{equation}
where $\Omega_i$ is defined in the \textbf{Definition \ref{def1}}. The following potential function can be used to avoid inter-collision
\begin{equation}\label{61}
    U_{ij} = \left\{\begin{array}{lll} \left(\frac{||\mathbf{p}_i-\mathbf{p}_j||_2^2-(2R_b^2-R_o^2)}{||\mathbf{p}_i-\mathbf{p}_j||_2^2-R_o^2}\right)^2-1,  \mathbf{p}_j \in D_i^o, \\ 0, \ \ \text{otherwise},
\end{array}\right.
\end{equation}
where $R_b>R_o>0$.  $\frac{\partial U_{ij}}{\partial \mathbf{P}_i}=0$ if $\mathbf{p}_j \not \in D_i^o$, and $\frac{\partial U_{ij}}{\partial \mathbf{P}_i}\not=0$ if $\mathbf{p}_j  \in D_i^o$. 
Let
\begin{equation}
    \mathbf{w}_i = -\sum\limits_{\mathbf{p}_j \in\Omega_i } \frac{\partial U_{ij}}{\partial \mathbf{p}_i}.
\end{equation}

The distance maintenance and inter-collision avoidance velocity control $\mathbf{v}_{im}$ is designed as
\begin{equation}
    \mathbf{v}_{im} = (||\mathbf{v}_{is}+ \mathbf{v}_{ih}||_2 + b)sgn(\mathbf{s}_i+\mathbf{w}_i),
\end{equation}
where $sgn(\cdot)$ is the signum function and $b>0$. Then, the velocity control law $\mathbf{v}_i$ in \eqref{sp2} becomes
\begin{equation}\label{sp41}
    \mathbf{v}_i = \mathbf{v}_{is} +\mathbf{v}_{ih} +\mathbf{v}_{im}.
\end{equation}

Considering that pursuer $i$ moves with maximum speed denoted by $V_i$, the velocity control law \eqref{sp41} can be revised as
\begin{equation}\label{sp5}
       \mathbf{v}_i = V_i \frac{\mathbf{v}_{is} +\mathbf{v}_{ih} +\mathbf{v}_{im}}{||\mathbf{v}_{is} +\mathbf{v}_{ih} +\mathbf{v}_{im}||_2}. 
\end{equation}

For the potential function $T_{ij}=Q_{ij}+U_{ij}$, an example is given in Fig. \ref{potential1} with $R_c=30$, $R_f=50$, $R_b=20$, and $R_o=5$.

\begin{theorem}\label{pp1}
In a pursuit-evasion game with a faster free-moving evader who is inside a $n$-pursuer polygon at initial time instant $t_0$, the velocity control law $\eqref{sp5}$ achieves the desired cooperative pursuit with distance maintenance and inter-collision avoidance at time instant $t>t_0$, such that the faster free-moving evader will be captured by the pursuers if the following conditions are satisfied:
\begin{enumerate}
    \item $||\mathbf{p}_{i}(t_0)-\mathbf{p}_{j}(t_0)||_2 < R_f, 
    i=1,2,\cdots,n,\mathbf{p}_j \in S_i$;
    \item 
    $R_o<||\mathbf{p}_{i}(t_0)-\mathbf{p}_{j}(t_0)||_2, i=1,2,\cdots,n, \mathbf{p}_j \in \Omega_i$;
    \item
    $R_o<R_b<R_c<R_f \le 2d_c\lambda_{\min}$;
    \item
     $\{\mathbf{p}_{i-1},\mathbf{p}_{i+1}\} \in S_i \in \Omega_i$.
\end{enumerate}

\begin{figure}[t]
\centering
\includegraphics[width=0.8\linewidth]{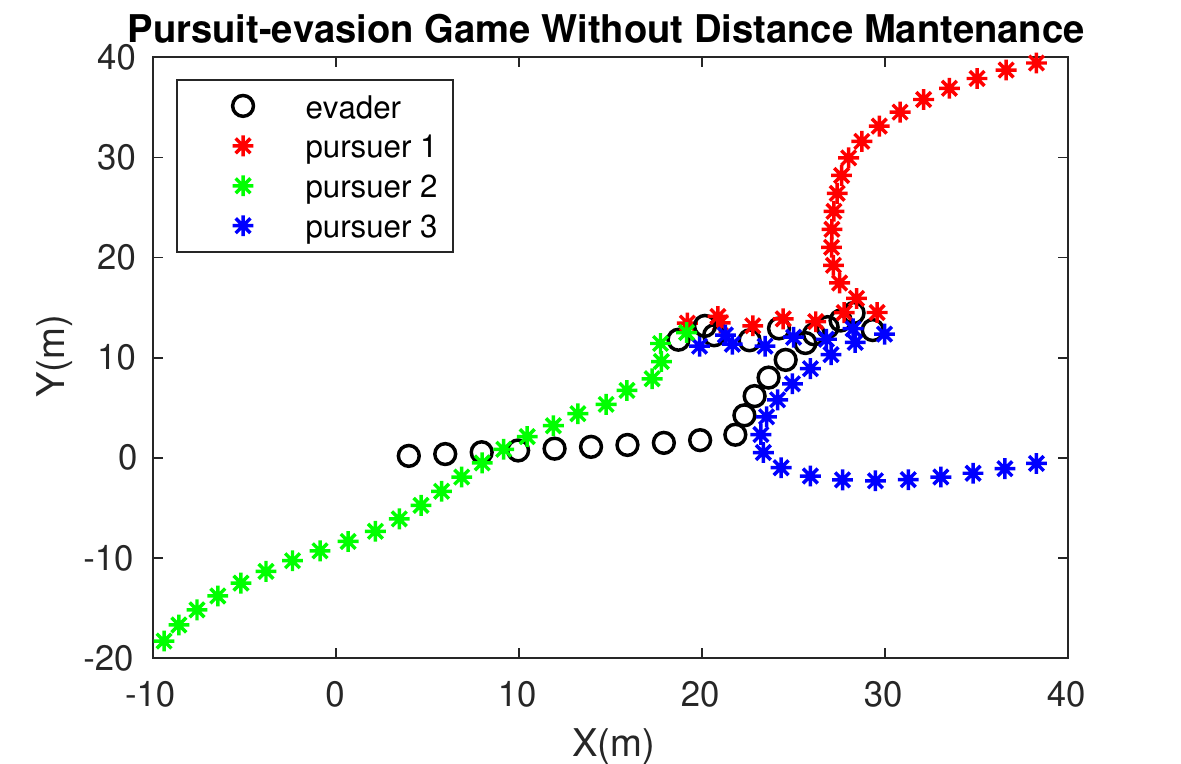}
\caption{Pursuit-evasion game with speed ratio of pursuer-evader being 0.9.}
\label{case1}
\end{figure}

\begin{proof}

Since $R_b <R_c$, the potential functions $Q_{ij}$ and $U_{ij}$ work in the different regions and do not affect each other. Therefore, the intersection of distance maintenance set $D_i^m$ and inter-collision avoidance set $D_i^o$ is null ($D_i^m \cap D_i^o = \emptyset $).  Since $\mathbf{s}_i+\mathbf{w}_i$ has the same sign with $\mathbf{v}_{is}+\mathbf{v}_{ih}+(||\mathbf{v}_{is}+\mathbf{v}_{ih}||_2+b)sgn(\mathbf{s}_{i}+\mathbf{w}_{i})$, for the pursuer $i$, we have 
\begin{equation}
\begin{array}{ll}
 & \sum\limits_{p_j \in \Omega_i}(\frac{\partial Q_{ij}}{\partial \mathbf{p}_i} +\frac{\partial U_{ij}}{\partial \mathbf{p}_i})  \mathbf{\dot p}_i  \\ 
 & = - (\mathbf{s}_i + \mathbf{w}_i) \cdot \mathbf{ \dot p}_i \\
 & = -(\mathbf{s}_i + \mathbf{w}_i) \cdot \frac{V_i (\mathbf{v}_{is} +\mathbf{v}_{ih} +\mathbf{v}_{im})}{||\mathbf{v}_{is} +\mathbf{v}_{ih} +\mathbf{v}_{im}||_2}  \\
 & =  -\frac{V_i(\mathbf{s}_i + \mathbf{w}_i)}{||\mathbf{v}_{is} +\mathbf{v}_{ih} +\mathbf{v}_{im}||_2}(\mathbf{v}_{is} +\mathbf{v}_{ih} + \\
 & (||\mathbf{v}_{is}+\mathbf{v}_{ih}||_2+b)sgn(\mathbf{s}_{i}+\mathbf{w}_{i})) \le 0.
\end{array}
\end{equation}

Considering a Lyaponov function candidate $V=\sum\limits_{i=1}^n\sum\limits_{p_j \in \Omega_i}(Q_{ij}+U_{ij})$, it yields
\begin{equation}
\begin{array}{ll}
\dot V & = \sum\limits_{i=1}^n\sum\limits_{p_j \in \Omega_i} (\frac{\partial Q_{ij}}{\partial \mathbf{p}_i} +\frac{\partial U_{ij}}{\partial \mathbf{p}_i}) \mathbf{ \dot p}_i  \le 0.
\end{array}
\end{equation}

The distances $||\mathbf{p}_{i}(t)-\mathbf{p}_{j}(t)||_2< R_f \le 2d_c\lambda_{\min},  i=1,2,\cdots,n,\mathbf{p}_j \in S_i$ and $R_o<||\mathbf{p}_{i}(t)-\mathbf{p}_{j}(t)||_2,   i=1,2,\cdots,n, \mathbf{p}_j \in \Omega_i$
are guaranteed at time instant $t>t_0$ because if  $||\mathbf{p}_{i}(t)-\mathbf{p}_{j}(t)||_2 \rightarrow R_f$ and $||\mathbf{p}_{i}(t)-\mathbf{p}_{j}(t)||_2 \rightarrow R_o$, the Lyapunov function will go to infinity, which contradicts the fact that $\dot V \le 0$.  Therefore, the pursuers can avoid the inter-collision and
the group occupied angle $\theta_G$ equals $2\pi$ at any time instant $t > t_0$ from Lemma \ref{lemma2}.
In other words, the faster evader will be trapped in the union of each pursuer' capture domain at any time instant $t > t_0$ and will be finally captured regardless of the strategy the evader adopts. 
\end{proof}
\end{theorem}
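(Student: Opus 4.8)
The plan is to build a Lyapunov argument from the distance‑maintenance and collision‑avoidance potentials $Q_{ij}$ and $U_{ij}$, use the monotonicity of the total potential to lock every relevant pairwise distance inside its safe interval for all $t>t_0$, and then feed that invariance into Lemma~\ref{lemma2} to conclude $\theta_G(t)=2\pi$ permanently. Concretely, take $V=\sum_{i=1}^{n}\sum_{\mathbf{p}_j\in\Omega_i}(Q_{ij}+U_{ij})$ and show $\dot V\le 0$ along the closed loop \eqref{sp5}.

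For the $\dot V\le 0$ step I would first note that condition~3 gives $R_b<R_c$, so the active region of $U_{ij}$ (distances in $(R_o,R_b]$) is disjoint from that of $Q_{ij}$ (distances in $[R_c,R_f)$); the two potentials therefore never act on the same pair at once, and $\mathbf{s}_i+\mathbf{w}_i=-\sum_{\mathbf{p}_j\in\Omega_i}\partial(Q_{ij}+U_{ij})/\partial\mathbf{p}_i$ (using $S_i\subseteq\Omega_i$ from condition~4). The crux is a sign‑domination estimate: the scalar $\|\mathbf{v}_{is}+\mathbf{v}_{ih}\|_2+b$ in $\mathbf{v}_{im}=(\|\mathbf{v}_{is}+\mathbf{v}_{ih}\|_2+b)\,\mathrm{sgn}(\mathbf{s}_i+\mathbf{w}_i)$ strictly exceeds the modulus of every component of $\mathbf{v}_{is}+\mathbf{v}_{ih}$, so each component of $\mathbf{v}_{is}+\mathbf{v}_{ih}+\mathbf{v}_{im}$ inherits the sign of the corresponding component of $\mathbf{s}_i+\mathbf{w}_i$. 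Hence $(\mathbf{s}_i+\mathbf{w}_i)\cdot(\mathbf{v}_{is}+\mathbf{v}_{ih}+\mathbf{v}_{im})\ge 0$, and since $\dot{\mathbf{p}}_i$ is the positive multiple $V_i(\mathbf{v}_{is}+\mathbf{v}_{ih}+\mathbf{v}_{im})/\|\mathbf{v}_{is}+\mathbf{v}_{ih}+\mathbf{v}_{im}\|_2$ of that vector, $\sum_{\mathbf{p}_j\in\Omega_i}\partial(Q_{ij}+U_{ij})/\partial\mathbf{p}_i\cdot\dot{\mathbf{p}}_i=-(\mathbf{s}_i+\mathbf{w}_i)\cdot\dot{\mathbf{p}}_i\le 0$; summing over $i$ yields $\dot V\le 0$.

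Next I would convert this into invariance. Conditions~1 and~2 make $\|\mathbf{p}_i(t_0)-\mathbf{p}_j(t_0)\|_2<R_f$ for $\mathbf{p}_j\in S_i$ and $\|\mathbf{p}_i(t_0)-\mathbf{p}_j(t_0)\|_2>R_o$ for $\mathbf{p}_j\in\Omega_i$, so $V(t_0)<\infty$. Since $Q_{ij}\to+\infty$ as $\|\mathbf{p}_i-\mathbf{p}_j\|_2\uparrow R_f$ and $U_{ij}\to+\infty$ as $\|\mathbf{p}_i-\mathbf{p}_j\|_2\downarrow R_o$, the bound $V(t)\le V(t_0)$ forbids either limit, giving $\|\mathbf{p}_i(t)-\mathbf{p}_j(t)\|_2<R_f\le 2d_c\lambda_{\min}$ for $\mathbf{p}_j\in S_i$ and $\|\mathbf{p}_i(t)-\mathbf{p}_j(t)\|_2>R_o$ for $\mathbf{p}_j\in\Omega_i$, which is exactly the asserted distance maintenance and inter‑collision avoidance.

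Finally, by condition~4 the pairs in $S_i$ are the edges of the $n$-pursuer polygon, so every edge has length strictly below $2d_c\lambda_{\min}$ for all $t>t_0$; the evader starts inside the polygon and, because $\theta_G=2\pi$ leaves it no unoccupied escape direction, it cannot cross any edge and therefore stays inside, so Lemma~\ref{lemma2} gives $\theta_G(t)=2\pi$ for every $t>t_0$. Every free‑moving direction of the evader is thus permanently occupied, the evader is confined to the union of the capture domains, and the hunting component $\mathbf{v}_{ih}$ with $h_i>0$ (guaranteed by the trade‑off design) drives some $r_i$ below $d_c$, i.e.\ capture occurs regardless of the evader's strategy. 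I expect the main obstacle to be the careful justification of the sign‑domination step (it relies on $\mathrm{sgn}$ acting componentwise and on the strict margin $b>0$) together with the claim that the evader cannot leave the polygon once $\theta_G=2\pi$ and that the polygon stays simple; the blow‑up/invariance portion is routine once $\dot V\le 0$ is established.
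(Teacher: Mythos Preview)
Your proposal is correct and follows essentially the same route as the paper: the same Lyapunov candidate $V=\sum_{i}\sum_{\mathbf{p}_j\in\Omega_i}(Q_{ij}+U_{ij})$, the same use of $R_b<R_c$ to decouple the two potentials, the same sign-domination argument that $\mathbf{v}_{im}$ forces $\mathbf{v}_{is}+\mathbf{v}_{ih}+\mathbf{v}_{im}$ to share the sign of $\mathbf{s}_i+\mathbf{w}_i$, the same barrier conclusion from $\dot V\le 0$ and $V(t_0)<\infty$, and the same appeal to Lemma~\ref{lemma2}. If anything, you are more explicit than the paper on two points it leaves implicit---why the componentwise $\mathrm{sgn}$ with the strict margin $b>0$ actually yields the sign inequality, and why the evader remains inside the polygon so that Lemma~\ref{lemma2} continues to apply---and your flagged concerns about these steps are exactly where the paper's argument is thinnest.
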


It is well known that when the length of any side of a convex polygon is no more than $2d_c\lambda_{\min}$, the radius of the circumcircle is no more than $\frac{d_c\lambda_{\min}}{\sin (\frac{\pi}{n})}$. 
For forming a convex polygon with the radius of the circumcircle being $R_p>d_c$, the number of the pursuers $n$ required to ensure capture satisfies:
\begin{equation}\label{number}
   \frac{d_c\lambda_{\min}}{\sin (\frac{\pi}{n})}> R_p \Longrightarrow n > \frac{\pi}{\arcsin{(\frac{d_c\lambda_{\min}}{R_p}})}.
\end{equation}

Compared with the existing sufficient capture conditions in \cite{Xu:2007wu} that the pursuers should have the same maximum speed and should be angle-evenly distributed around the evader at the beginning of the game, the proposed capture conditions only require the distance maintenance of the $n$-pursuer polygon.

\section{Simulation}\label{simulation}

\subsection{Pursuit-evasion Game Simulation without Distance Maintenance}
The three-pursuer one-evader game is simulated using Matlab. We first show the performance of distributed pursuit algorithm \eqref{sp2} with the trade-off algorithm \eqref{37}. 
In order to occupy all the free-moving directions of the evader, we have $\theta_i \ge \frac{2\pi}{3}$ from \eqref{11}. Then the speed ratio must satisfy $\lambda_i\ge sin(\frac{\theta}{2})=0.86$ from \eqref{4}. In the simulation, the speed ratio of pursuer $i$ to the evader is set as $\lambda_i=\frac{V_{i}}{V_e}=\frac{9}{10}$ with $V_e= 2,i=1,2,3$. The initial positions of the pursuers and the evader are set as $(40, 40), (-10, -20), (40, 0)$, and $(2, 0)$.  The terminal condition is set as $d_c=1$. The neighbors of pursuer $i$ are $\Omega_i=\{\mathbf{p}_j:||\mathbf{p}_j-\mathbf{p}_i|| \le 100,j \not = i, j=1,2,3 \}$.

\begin{figure}[t]
\centering
\includegraphics[width=0.8\linewidth]{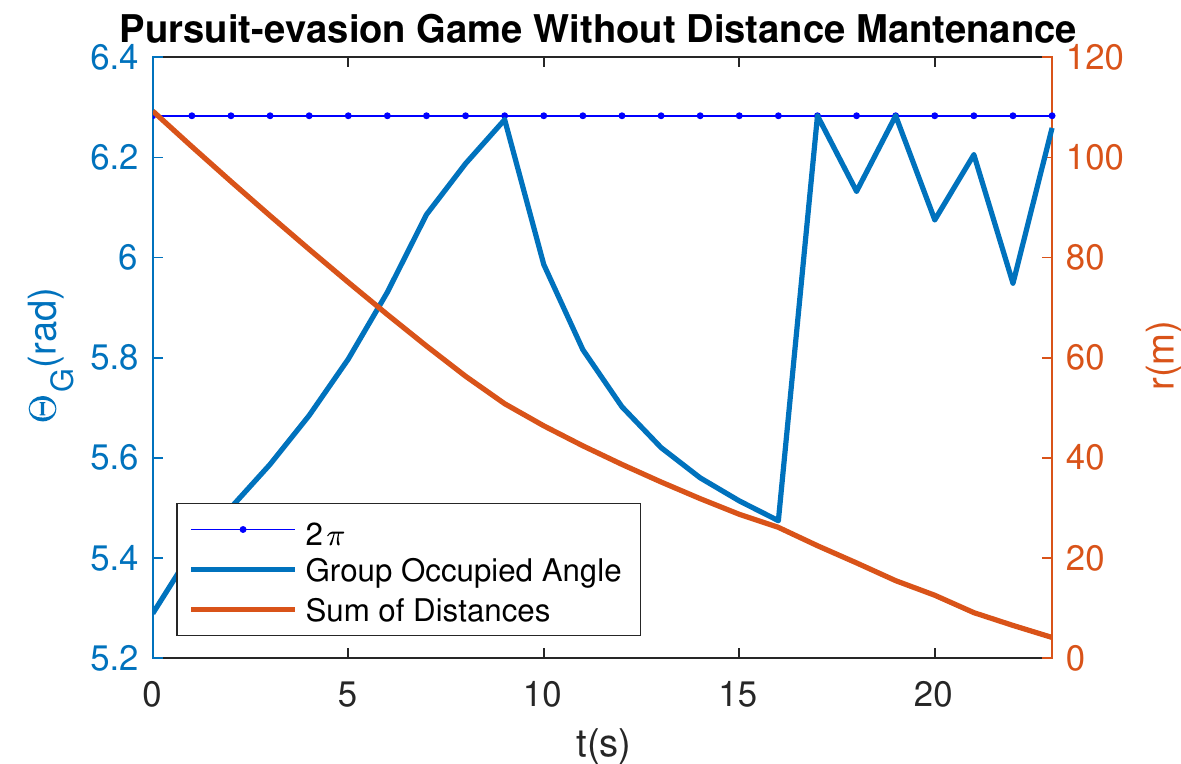}
\caption{Group occupied angle and sum of distances to the evader.}
\label{case2}
\end{figure}

\subsubsection{Strategy of the Evader}

The strategy in \eqref{eq:evader-flee} makes the evader react more responsively to the nearer pursuers.
\begin{equation}\label{eq:evader-flee}
\mathbf{v}_e =  V_e \frac{\sum_i \frac{k_e(\mathbf{p}_e-\mathbf{p}_i
)}{r_i}+\mathbf{p}_e}{||\sum_i \frac{k_e(\mathbf{p}_e-\mathbf{p}_i
)}{r_i}+\mathbf{p}_e||_2},
\end{equation}
where the parameter in \eqref{eq:evader-flee} is set as $k_e=140$. The simulation results are shown in Fig. \ref{case1} and Fig. \ref{case2}, respectively. The pursuers first tried to form an encirclement because there is an escapable angle for the evader to escape. The group occupied angle is increased to $2\pi$ at time instant $9s$, then the pursuers focuses on the hunting  and the evader is finally captured. The simulation result validates the effectiveness of distributed pursuit algorithm \eqref{sp2} with the trade-off algorithm \eqref{37}. 

\begin{figure}[!t]
\centering
\includegraphics[width=0.8\linewidth]{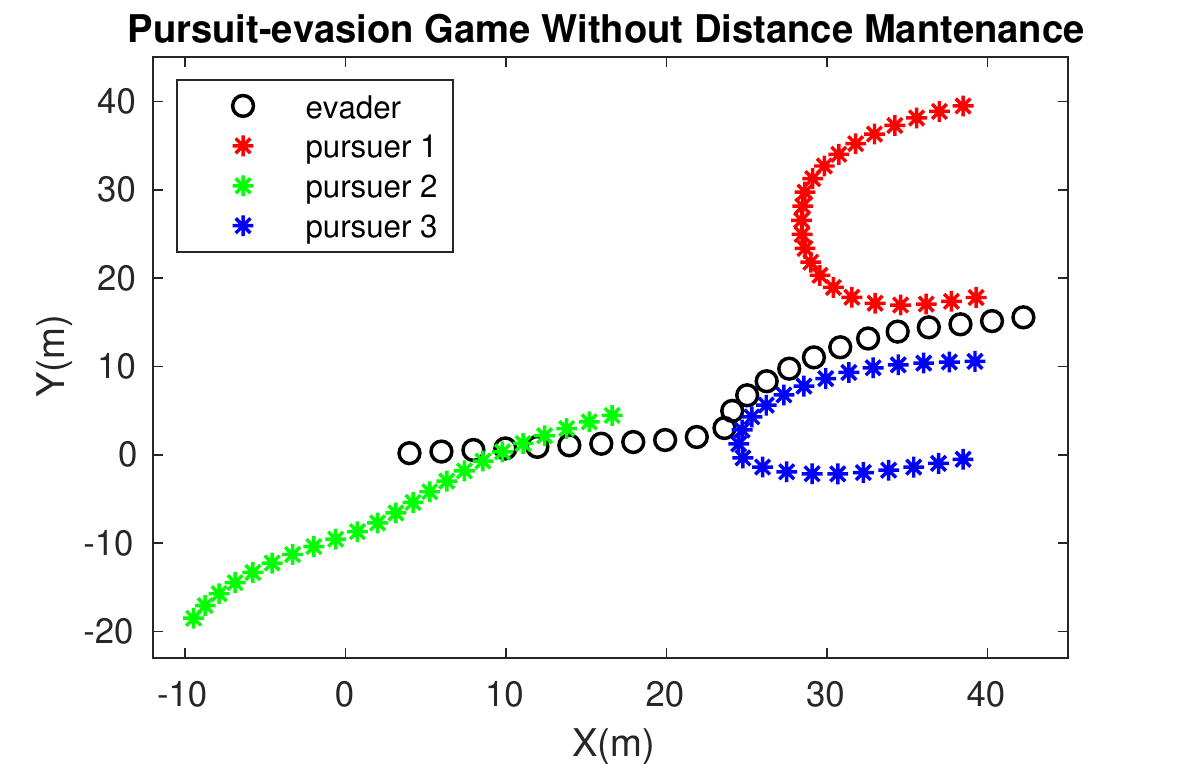}
\caption{Pursuit-evasion game with speed ratio of pursuer-evader being 0.8.}
\label{case5}
\end{figure}

\begin{figure*}[!t]
		\centering
		\begin{subfigure}[t]{0.32\textwidth}
		\centering
		\includegraphics[width=1\linewidth]{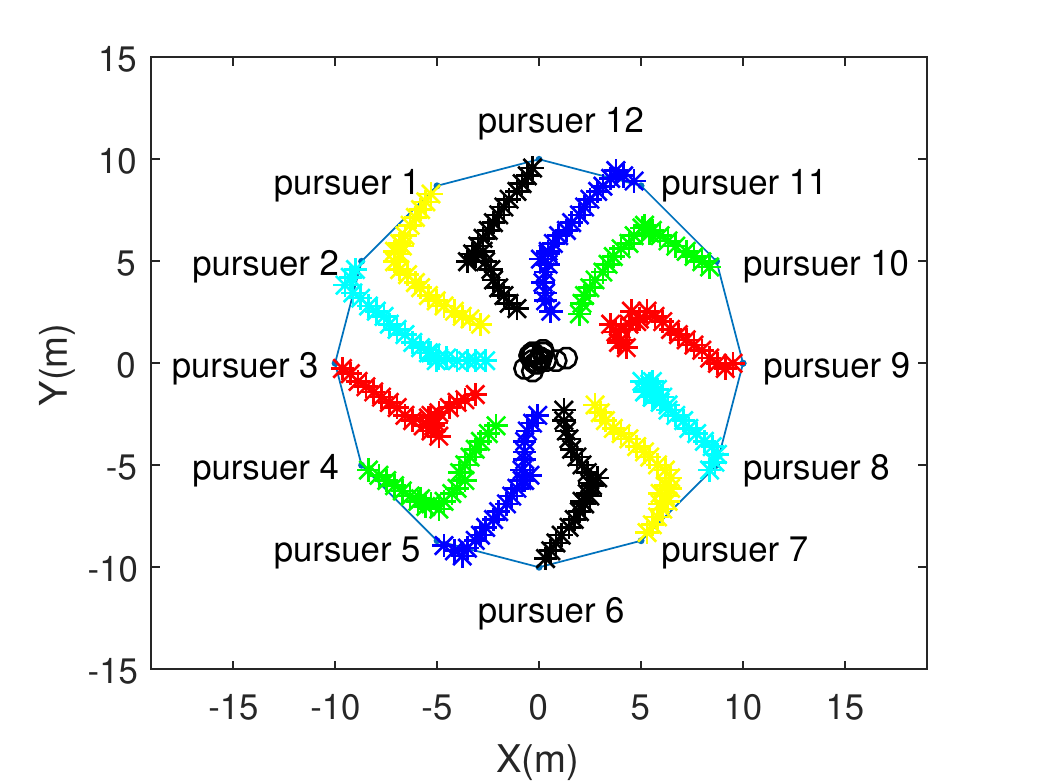}
		\caption{The trajectories of the game.}
		\label{case7}
	\end{subfigure}
	\begin{subfigure}[t]{0.32\textwidth} 
		\centering
		\includegraphics[width=1\linewidth]{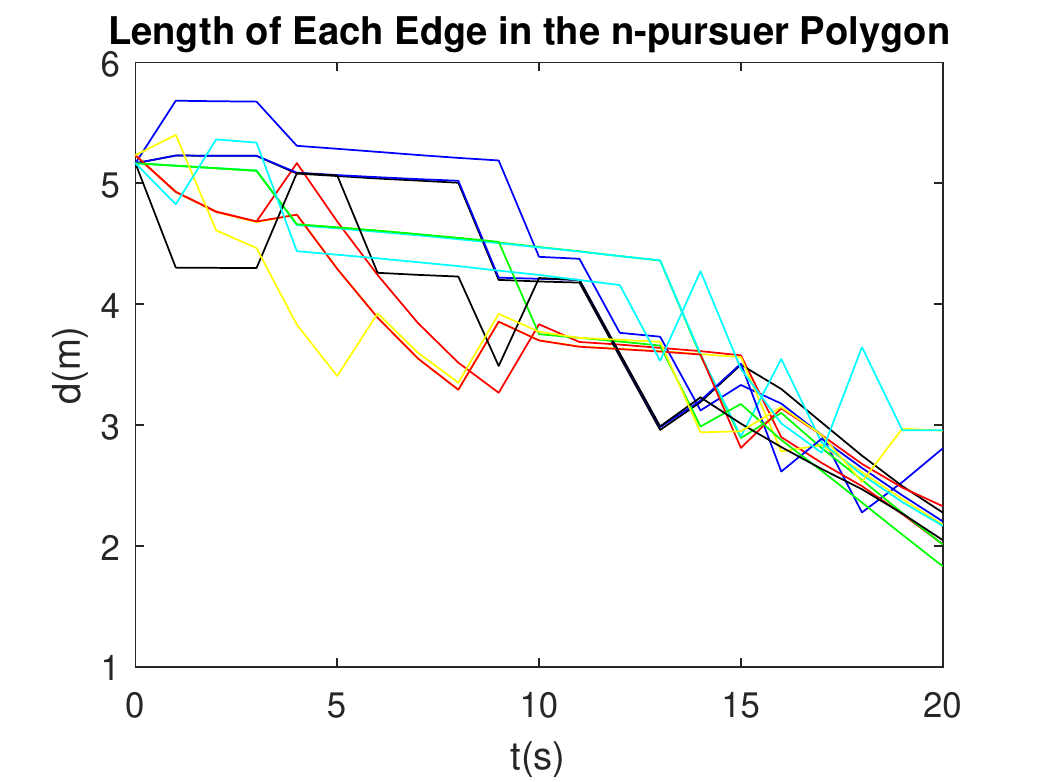}
		\caption{Length of edges in the $n$-pursuer polygon.}
		\label{case9}
	\end{subfigure}
	\begin{subfigure}[t]{0.32\textwidth}
	\centering
	\includegraphics[width=1\linewidth]{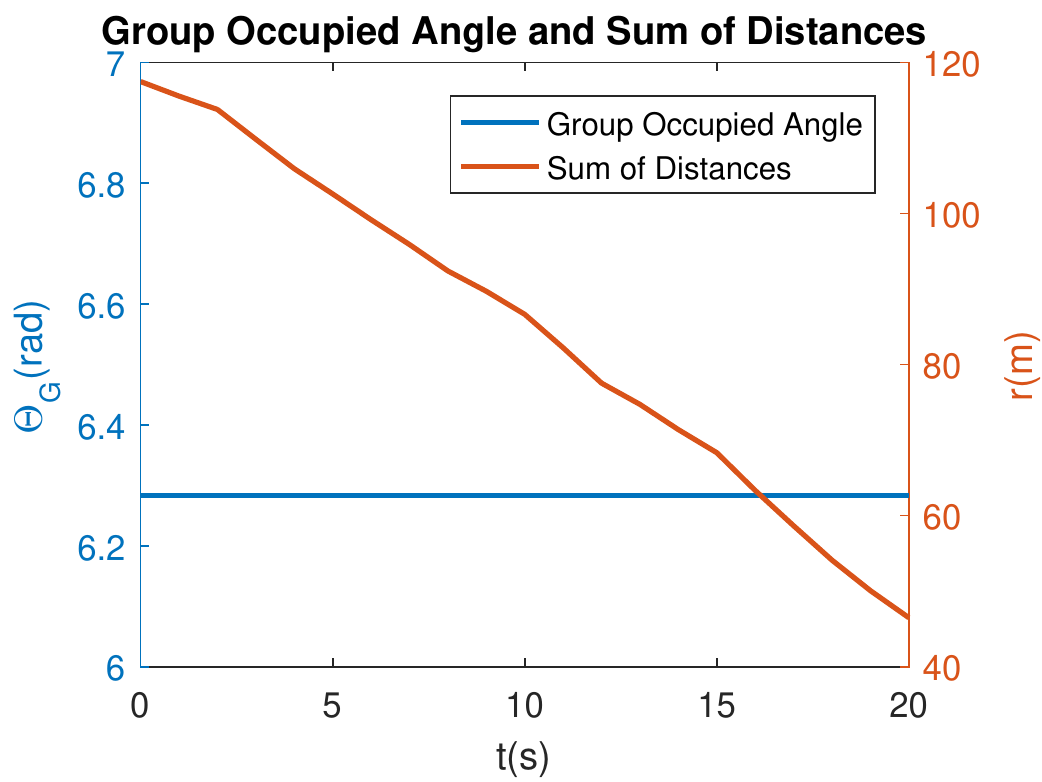}
	\caption{The occupied angle and sum of distances.}
	\label{case8}
	\end{subfigure}
	\caption{The pursuit-evasion game with distance maintenance and inter-collision avoidance.}\label{fig:pursuit-evasion-simulation}
\end{figure*}

\subsubsection{The effects of Speed Ratio on Pursuit-evasion Game}

One important parameter for pursuit-evasion game is the speed ratio. 
Smaller speed ratio $\frac{V_{i}}{V_e}$ means that pursuer $i$ has less chance to capture the evader.  If the speed ratio is set as $\lambda_i=\frac{V_{i}}{V_e}=0.8$  with $V_e= 2, i=1,2,3$, the occupied angle is $\theta_i=2\arcsin(0.8)\simeq0.59\pi<\frac{2\pi}{3},i=1,2,3$. Therefore, the conditions $\epsilon_{1,2}(t),\epsilon_{2,3}(t),\epsilon_{3,1}(t)<0$ can not be satisfied at any time instant. There is always an escapable angle for the evader to escape from capture shown in Fig. \ref{case5}.

\subsection{Pursuit-evasion Game Simulation with Distance Maintenance and Inter-collision Avoidance}

The terminal condition is set as $d_c=3.1$. The speed ratio is set as $\lambda_i =\frac{V_{i}}{V_e}= 0.95$ with $V_e= 0.5,i=1,\cdots,12$. The parameters in \eqref{distance} and \eqref{61} are chosen as $R_c=3.5$,
$R_f=5.7 < 2d_c\lambda_i$, $R_o=0.5$, $R_b=3$, and $b=1$. The initial positions of twelve-pursuer polygon are set as $(10,0),(8.7,5),(5,8.7),(0,10),(-5,8.7),(-8.7,5),(-10,0)$, $(-8.7,-5),(-5,-8.7),(0,-10),(5,-8.7),(8.7,-5)$. The other parameters are set as $\Omega_i=\{\mathbf{p}_j:||\mathbf{p}_j-\mathbf{p}_i|| \le 50,j \not = i\}$ and $S_i=\{\mathbf{p}_{i-1}(t_0),\mathbf{p}_{i+1}(t_0)\}$. The initial distribution satisfies $||\mathbf{p}_{i}(t_0)-\mathbf{p}_{j}(t_0)||_2 < R_f, \mathbf{p}_j \in S_i$ and $R_o<||\mathbf{p}_{i}(t_0)-\mathbf{p}_{j}(t_0)||_2, \mathbf{p}_j \in \Omega_i$. 
From \textbf{Theorem} \ref{pp1}, the evader will be captured under the velocity control law \eqref{sp5} regardless of the strategy of the evader.
The simulation results are presented in Fig. \ref{fig:pursuit-evasion-simulation}. The length of each edge is maintained $||\mathbf{p}_{i}(t)-\mathbf{p}_{j}(t)||_2 < R_f < 2d_c\lambda_{\min} (i=1,\cdots,12;\mathbf{p}_j \in S_i;t>t_0)$. The group occupied angle equals $2\pi$ at any time instant $t>t_0$. There is no chance for the evader to escape from capture, which will be finally captured.

\section{Experiments}\label{experiment}

We will demonstrate the pursuit-evasion game using multiple ground robots. 
More information about the experiments is released at 
\url{https://chenwang.site/cooperative-pursuit}. The kinematic models of the ground robot are more complex than the proposed kinematic model \eqref{kinetic}. 
The proposed velocity control law $\mathbf{v}_i$ can be regarded as the cooperative path planning for robots.

We built four micro-ground robots, each of which contains an embedded computing board for decision making and a ZigBee module for communication with an external camera positioning system, which is able to recognize and localize the colored circle markers carried by robots.
The camera positioning system is implemented in Ubuntu using the OpenCV library. The speed ratio is set as $\lambda_i=\sin (\frac{3\pi}{8}),i=1,2,3$.  The experiment is carried out under the velocity control law \eqref{sp2} with the trade-off algorithm \eqref{37}.
Since the moving direction and escape strategy of the evader are unknown to the pursuers, to test the unpredictable moving direction, we choose to control the evader manually.
Some snapshots of the pursuit-evasion game from the overlook view are shown in Fig. \ref{fig:pursuit-evasion}, where the real-time moving directions of the robots are marked by the arrows.  
The experimental results show that the pursuers can achieve capture by cooperation even if the moving direction and escape strategy of the evader are unknown to the pursuers.
The evader slows down when it makes a turn and then the pursuers get an opportunity to capture it by cooperation.

\begin{figure*}[!t]
	\centering
	\includegraphics[width=1\linewidth]{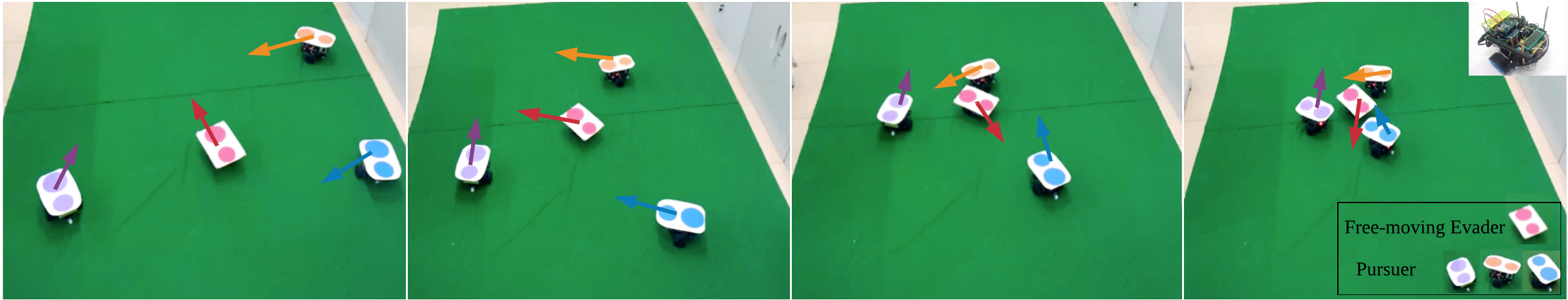}
	\centering
	\caption{The pursuit-evasion experiment using multi-ground-robots. }
	\label{fig:pursuit-evasion}
\end{figure*}

\section{Conclusion}\label{conclusion}

In this paper, a  distributed pursuit algorithm is proposed for the pursuit-evasion game with a faster free-moving evader. The encirclement algorithm can form and keep an encirclement, which traps and keeps the faster evader into the union of each pursuer' capture domain. The trade-off algorithm can balance between forming an encirclement and approaching the evader. In addition, sufficient capture conditions are derived that involve the initial spatial distribution and speed ratios to guarantee capture.

\section{Acknowledgement}
We would like to thank Mr. Huiyang Deng, Chengsi Shang, Haoqing Lan, and Xiaofei Long for their help in the pursuit-evasion experiment. We also appreciate the help from Mr. Muqing Cao, Thien-Minh Nguyen, and Hoang-Thien Nguyen for their suggestions.

\ifCLASSOPTIONcaptionsoff
  \newpage
\fi


\bibliographystyle{IEEEtran}

\bibliography{papers}

\end{document}